\newtheoremstyle{willthm}
     {4pt}
     {4pt}
     {}
     {}
     {\bfseries}
     {.}
     { }
     {}
\theoremstyle{willthm}
\newtheorem{thm}{Theorem}[section]
\newtheorem{lem}[thm]{Lemma}
\newtheorem{cor}[thm]{Corollary}
\newtheorem{rem}[thm]{Remark}
\newtheorem{dfn}[thm]{Definition}
\newcommand{\abs}[1]{\left|#1\right|}
\newcommand{\calA}{\mathcal{A}}
\newcommand{\calP}{\mathcal{P}}
\newcommand{\ceil}[1]{\left\lceil#1\right\rceil}
\newcommand{\commgraph}{G}
\newcommand{\dft}[1]{\textbf{\textit{#1}}}
\newcommand{\e}{\varepsilon}
\newcommand{\matching}{M}
\newcommand{\paren}[1]{\left(#1\right)}
\newcommand{\set}[1]{\left\{#1\right\}}
\newcommand{\st}{\middle |}
\DeclareMathOperator{\E}{\mathbf{E}}
\def\AMM{{\bfseries AMM}}
\def\ASM{{\bfseries ASM}}
\def\MatchingRound{{\bfseries MatchingRound}}
\def\MaximalMatching{{\bfseries MaximalMatching}}
\def\ProposalRound{{\bfseries ProposalRound}}
\def\QuantileMatch{{\bfseries QuantileMatch}}
\def\RandASM{{\bfseries RandASM}}
\def\AlmostRegularASM{{\bfseries AlmostRegularASM}}
\title{Fast distributed almost stable matchings}
\author{
Rafail Ostrovsky\thanks{University of California, Los Angeles (Departments of Computer Science and Mathematics). Work supported in part by NSF grants 09165174, 1065276, 1118126 and 1136174, US-Israel BSF grant 2008411, OKAWA Foundation Research Award, IBM Faculty Research Award, Xerox Faculty Research Award, B. John Garrick Foundation Award, Teradata Research Award, and Lockheed-Martin Corporation Research Award. This material is based upon work supported by the Defense Advanced Research Projects Agency through the U.S. Office of Naval Research under Contract N00014-11-1-0392. The views expressed are those of the author and do not reflect the official policy or position of the Department of Defense or the U.S. Government.}
\and
Will Rosenbaum\thanks{University of California, Los Angeles (Department of Mathematics).}
}
\date{\today}
\begin{document}
  \maketitle
  \thispagestyle{empty}
  \begin{abstract}
    \normalsize
    In their seminal work on the Stable Marriage Problem, Gale and Shapley \cite{GS62} describe an algorithm which finds a stable matching in $O(n^2)$ communication rounds. Their algorithm has a natural interpretation as a distributed algorithm where each player is represented by a single processor. In this distributed model, Flor{\'e}en, Kaski, Polishchuk, and Suomela \cite{FKPS10} recently showed that for bounded preference lists, terminating the Gale-Shapley algorithm after a constant number of rounds results in an \emph{almost} stable matching. In this paper, we describe a new deterministic distributed algorithm which finds an almost stable matching in $O(\log^5 n)$ communication rounds for \emph{arbitrary} preferences. We also present a faster randomized variant which requires $O(\log^2 n)$ rounds. This run-time can be improved to $O(1)$ rounds for ``almost regular'' (and in particular complete) preferences. To our knowledge, these are the first sub-polynomial round distributed algorithms for any variant of the stable marriage problem with unbounded preferences.
  \end{abstract}

  \section{Introduction}
  \label{sec:intro}

  \subsection{Historical Background}


    In their seminal work, Gale and Shapley \cite{GS62} consider the following problem. Members of disjoint sets of $n$ \dft{men} and $n$ \dft{women} each rank all members of the opposite sex. The men and women (which we collectively call \dft{players}) wish to form a \dft{matching}---a one-to-one correspondence between the men and women---which is \dft{stable} in the sense that it contains no \dft{blocking pairs}: pairs of players who mutually prefer each other to their assigned partners in the matching. Gale and Shapley showed that a stable matching always exists by giving an explicit algorithm for finding one. The centralized Gale-Shapley algorithm runs in time $\tilde{O}(n^2)$, and this run-time is asymptotically optimal for centralized algorithms \cite{GI89}. The Gale-Shapley algorithm easily generalizes to the case of \dft{incomplete preferences} where each player ranks only a subset of the members of the opposite sex \cite{GI89}.

  The Gale-Shapley algorithm has a natural interpretation as a distributed algorithm, where each player is represented by a separate processor which privately holds that player's preferences. The communication links between players are formed by pairs of players who appear on each other's preference lists. This model is natural in, for example, social networks where players may be constrained to be matched with acquaintances and do not communicate with strangers. In this model, the input to each processor has size $\tilde{O}(n)$, yet there is still no known distributed algorithm which improves upon the Gale-Shapley algorithm's $\tilde{O}(n^2)$ run-time for arbitrary preferences.\footnote{In the distributed computational model with complete preferences, each player can broadcast their preferences to all other players in $O(n)$ rounds, after which each player runs a centralized version of the Gale-Shapley algorithm. While this process requires only $O(n)$ communication \emph{rounds}, the synchronous distributed run-\emph{time} is still $\tilde{\Theta}(n^2)$ in the worst case.} 


  Recently, there has been interest in approximate versions of the stable marriage problem \cite{ABM06, EH08, FKPS10, HMV14, KP09}, where the goal is to find a matching which is ``almost stable.'' There is no consensus in the literature on precisely how to measure almost stability, but typically almost stability requires that a matching induces relatively few blocking pairs. Eriksson and H{\"a}ggstr{\"o}m \cite{EH08} argue that, ``the proportion of blocking pairs among all possible pairs is usually the best measure of instability.'' Using a finer notion of almost stability, Flor{\'e}en, Kaski, Polishchuk, and Suomela show \cite{FKPS10} that for \emph{bounded} preference lists, truncating the Gale-Shapley algorithm after boundedly many communication rounds yields a matching that induces at most $\e \abs{M}$ blocking pairs. Here $\abs{M}$ is the size of the matching produced. More recently, Hassidim, Mansour and Vardi \cite{HMV14} show a similar result in a more restrictive ``local'' computational model, so long as the men's preferences are chosen uniformly at random.


    Kipnis and Patt-Shamir \cite{KP09} give an algorithm which finds an almost stable matching using $O(n)$ communication rounds in the worst case, using a finer notion of approximate stability than we consider. Specifically, in their notion of almost-stability, a matching is almost stable if no pair of players can both improve their match by more than an $\e$-fraction of their preference list by deviating from their assigned partners. They also prove an $\Omega(\sqrt{n} / \log n)$ communication round lower bound for finding an approximate stable matching for this notion of approximation.

  \subsection{Overview of Results}

  We consider an approximate version of the stable marriage problem where an almost stable matching is allowed to have $\e \abs{E}$ blocking pairs. Here $E$ is the set of pairs of men and women who rank one another (that is, the set of edges in the communication graph). Our notion of approximation, which generalizes almost stability as  described in \cite{EH08}, is strictly coarser than those used in \cite{FKPS10} and \cite{KP09}. However, for bounded preference (the context of \cite{FKPS10}), our notion of instability agrees with that of \cite{FKPS10} up to a constant factor.

  Using the notion of almost stability given above, we describe a deterministic distributed algorithm, \ASM, which produces an almost stable matching in $O(\log^5(n))$ rounds. We note that in order to obtain this sub-polynomial run-time, we cannot use a finer notion of approximation than \cite{KP09}, who prove an $\Omega(\sqrt{n} / \log n)$ round lower bound for their model. We remark that the after removing an arbitrarily small fraction of ``bad'' players, the output of \ASM\ is almost stable in the sense of \cite{KP09} as well. We further describe a faster randomized variant of \ASM\ which runs in $O(\log^2(n))$ rounds. For preferences which are ``almost regular,'' (and in particular for complete preferences) this run-time can be improved to $O(1)$.

  \begin{thm}
    \label{thm:main}
    There exists a deterministic distributed algorithm \ASM\ which produces a $(1 - \e)$-stable matching in $O(\log^5(n))$ communication rounds. A randomized variant of the algorithm, \RandASM\ runs in $O(\log^2(n))$ rounds for general preferences, and can be improved to $O(1)$ rounds for almost regular (and in particular complete) preferences.
  \end{thm}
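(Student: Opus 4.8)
The plan is to build \ASM\ from three ingredients: a distributed \MaximalMatching\ routine, a ``quantile'' proposal scheme that lets every player propose to a geometrically growing prefix of its preference list, and an outer loop of $O(\log n)$ phases. First I would recall (or, if not available off the shelf, establish) that a maximal matching of a bipartite graph is computable deterministically in $O(\log^4 n)$ rounds and, with randomization, in $O(\log n)$ rounds. The algorithm then runs, for $i = 1, 2, \ldots, \Theta(\log n)$, a \MarriageRound: every currently unmatched (``active'') player declares itself willing to be matched to every neighbor in the top $2^{i}$ positions of its preference list; we form the bipartite graph $H_i$ of mutually willing active pairs together with the edges $(m,w)$ in which $w$ is willing to trade her current tentative partner for $m$; we compute a maximal matching $N_i$ in $H_i$; and we update the tentative matching, letting women trade up in Gale--Shapley fashion and returning displaced men to the active pool. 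Since the prefixes double each phase, after $\Theta(\log n)$ phases the willing sets exhaust the whole graph.

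For the round complexity, each \MarriageRound\ uses a constant number of communication rounds plus one call to \MaximalMatching; over $\Theta(\log n)$ phases this yields $\Theta(\log n) \cdot O(\log^4 n) = O(\log^5 n)$ for \ASM\ and $\Theta(\log n) \cdot O(\log n) = O(\log^2 n)$ for \RandASM\ (which is \ASM\ with the randomized maximal-matching subroutine). For almost regular preferences --- all degrees within a constant factor of one another, complete preferences being the case where every degree is $n$ --- the point is that the willing sets need not grow past $\Theta(\min_v \deg v)$ before a single maximal matching already saturates all but an $\e$-fraction of the ``relevant'' edges, so a constant number of \MarriageRound{}s suffices; and since in a nearly regular bipartite graph a maximal matching covering a $(1-\e)$-fraction of a side is obtainable in $O(1)$ rounds (a constant fraction of the still-unmatched vertices is matched per round of randomized propose/accept, iterated $O(\log(1/\e))$ times), the whole of \AlmostRegularASM\ runs in $O(1)$ rounds.

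The substance of the argument is bounding the number of blocking pairs of the output matching $M$ by $\e \abs{E}$. I would argue by charging. Fix a blocking pair $(m,w)$: $m$ ranks $w$ above $M(m)$ and $w$ ranks $m$ above $M(w)$, with an unmatched player ranking everyone above being unmatched. By the doubling schedule, unless $(m,w)$ lies in a designated ``tail'' set $T$ --- the edges cut off because one endpoint was matched before its willing set grew to include the other --- there is a phase $i$ with $(m,w) \in H_i$. Maximality of $N_i$ forces $m$ or $w$ to be matched in $N_i$ to a partner it weakly prefers to the other; a Gale--Shapley-type monotonicity argument (women only improve; a man's tentative partner can only degrade, but every degradation is paid for by a strict improvement of the displacing woman) then propagates this to the final matching, contradicting that $(m,w)$ blocks. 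Hence every blocking pair lies in $T$, and it remains to show $\abs{T} \le \e \abs{E}$.

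Bounding $\abs{T}$ is the main obstacle. An edge enters $T$ only when an endpoint is matched ``early'', and each player is matched at most once per phase and re-enters the pool only after being displaced. I would introduce a potential equal to the total length of undiscarded preference-list suffixes over active men, and show it contracts by a constant factor per phase --- using that the maximal matching in $H_i$ covers a constant fraction of the active men whose willing sets are ``saturated'' --- so that after $\Theta(\log n) + \Theta(\log(1/\e))$ phases the number of men still carrying a long unexplored suffix, hence $\abs{T}$, is at most $\e \abs{E}$. Making this contraction robust to the re-insertion of displaced men, whose suffixes are reset, is the delicate point; I expect it to require amortizing re-insertions against the rank gains of the women responsible, exactly as in the classical analysis of Gale--Shapley, together with a careful choice of the constant in the doubling schedule.
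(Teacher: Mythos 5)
There is a genuine gap, and your architecture also diverges from the paper's in a way that matters. Your central deduction --- that maximality of $N_i$ in $H_i$ ``forces $m$ or $w$ to be matched in $N_i$ to a partner it weakly prefers to the other'' --- is false as stated: maximality only guarantees that some endpoint of each edge of $H_i$ is covered, and says nothing about where the assigned partner sits in that endpoint's preference order. With your doubling schedule this is fatal, because a man matched to a woman at position $2^i$ of his list can form blocking pairs with every woman in positions $2^{i-1}+1,\ldots,2^i-1$, i.e.\ with up to half of his entire preference list, and these pairs are never in your tail set $T$. The paper avoids exactly this by quantizing each list into $k=O(\e^{-1})$ quantiles of \emph{equal} size $\deg v/k$ (not geometrically growing prefixes), having each woman accept only her best quantile that proposed and reject every man in a weakly worse quantile than her match, and then observing (Lemmas \ref{lem:eps-blocking-good} and \ref{lem:good-blocking}) that a ``good'' man's blocking pairs are all confined to within one quantile of his match, hence are not $(2/k)$-blocking, and that there are at most $4\abs{E}/k$ such pairs in total. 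Equal-width quantiles are what make that last count $O(\e\abs{E})$; with doubling widths the analogous count is $\Theta(\abs{E})$.

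The second gap is the bound $\abs{T}\le\e\abs{E}$, which you correctly identify as the crux but do not prove. The proposed potential (total undiscarded suffix length over active men) does not obviously contract: the maximal matching in $H_i$ may cover mostly women, and displaced men re-enter with reset suffixes, so the amortization you gesture at is not routine. The paper's argument here is different and complete: its outer loop's $2^i$ threshold is on $\abs{Q^m}$, the man's \emph{remaining degree} (deactivating low-degree men), not on a preference prefix; Lemma \ref{lem:few-bad-men} bounds the fraction of still-``bad'' active men by $\delta$ via a counting of ``quantile rejections'' (each call to \QuantileMatch\ makes every bad man witness one, and only $2k\abs{\calA}$ can ever occur); and Lemma \ref{lem:few-bad-blockers} plus the charging argument in the proof of Lemma \ref{lem:bad-guarantee} bounds the bad men's $(2/k)$-blocking pairs by $\frac{2\delta}{1-\delta}\abs{E}$ by pairing each $B_i$ against good men of comparable remaining degree. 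Your run-time accounting and your use of a randomized $O(\log n)$-round maximal (or constant-round almost-maximal) matching subroutine for \RandASM\ and the almost-regular case do match the paper's, but without repairing the two gaps above the approximation guarantee --- the substance of the theorem --- is not established.
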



  \ASM\ can be viewed as a generalization of the classical Gale-Shapley algorithm \cite{GS62} which allows for multiple simultaneous proposals by the men and acceptances by the women. In \ASM, the players quantize their preferences into $O(\e^{-1})$ quantiles of equal size. In each step of the algorithm, the men propose to all women in their best nonempty quantile. Each woman accepts proposals only from her best quantile receiving proposals. A maximal matching is then found among the accepted proposals, and matched women reject men they do not prefer to their matches. This procedure is iterated until a large fraction of men are either matched or have been rejected by all women. 

  The analysis of our algorithm follows in two steps. We first show that by quantizing preferences, the matching found by \ASM\ cannot contain a large fraction of blocking pairs among the matched (or rejected) players. We bound the number of blocking pairs from the remaining ``bad'' players by showing there are few such players, and that only a small fraction can participate in many blocking pairs.

  The remainder of the paper is organized as follows. In Section \ref{sec:preliminaries} we formalize our notion almost stable matchings and our computational model. We also overview methods of computing maximal matchings which our algorithm will require as subroutines. Section \ref{sec:description} describes \ASM\ and its subroutines in detail and states basic guarantees for the subroutines. Section \ref{sec:performance} proves the performance guarantees for \ASM. Finally, in Section \ref{sec:randomized} we describe the randomized variants of \ASM.


  \section{Preliminaries}
  \label{sec:preliminaries}

  \subsection{Stable and almost stable matchings}


  We consider the stable marriage problem as originally described by Gale and Shapley \cite{GS62} with incomplete preferences or, equivalently, unacceptable partners (cf. \cite{GI89, Manlove13}). Let $X$ and $Y$ be sets of women and men, respectively. For simplicity, we assume $\abs{X} = \abs{Y} = n$. Each player $v \in X \cup Y$ holds a \dft{preference list} or \dft{ranking} $P^v$---a linear order on a subset of the members of the opposite sex. We denote the set of all player's preferences by $\calP = \set{P^v \st v \in X \cup Y}$.  We refer to the players $u$ that appear on $v$'s preference list $P^v$ as $v$'s \dft{acceptable partners}. We call $\calP$ \dft{complete} if each player ranks all players of the opposite sex. If a man $m$ precedes $m'$ on woman $w$'s preference list, we write $m \succ_w m'$, and we say that $w$ \dft{prefers} $m$ to $m'$. For simplicity, we assume that preferences are symmetric in the sense that if $m$ appears in $P^w$, then $w$ appears in $P^m$. Given players $v$ and $u$ of opposite genders, we let $P^v(u)$ denote $v$'s \dft{rank} of $u$. For example, $P^v(u) = 1$ means that $u$ is $v$'s most favored partner, et cetera.

  We define the \dft{communication graph} $\commgraph = (V, E)$ for a set of preferences $\calP$ to be
  \[
  V = Y \times X, \qquad E = \set{(m, w) \st m \in P^w,\ w \in P^m}.
  \]
  For a communication graph $\commgraph = (V, E)$, we denote the degree of $v \in V$ by $\deg v$, which is the number of players that appear on $v$'s preference list.
  
  A \dft{matching} $\matching \subseteq E$ is a set of edges in $E$ such that no two edges share a vertex. Given a matching $\matching$ and $(m, w) \in \matching$, we call $m$ and $w$ \dft{partners} and write $p(w) = m$ and $p(m) = w$. Given preferences and a matching $\matching$, we say that an edge $(m, w) \in E$ is a \dft{blocking pair} if $(m, w) \notin \matching$, but $m$ and $w$ mutually prefer each other to their partners in $\matching$; that is,
  \[
  m \succ_w p(w) \quad\text{and}\quad w \succ_m p(m).
  \]
  By convention, we assume each unmatched player ($p(v) = \varnothing$) prefers all acceptable partners to being without a partner. A \dft{stable matching} is a matching which contains no blocking pairs.

  We are primarily concerned with finding matchings which are ``almost stable'' in the sense that they induce relatively few blocking pairs. We use a definition of almost stability given by Eriksson and H{\"a}ggstr{\"o}m \cite{EH08}, modified to allow for incomplete preference lists.
  \begin{dfn}
    \label{dfn:as}
    Given $\e \geq 0$ and preferences $\calP$, we say that a matching $\matching$ is \dft{$(1 - \e)$-stable} with respect to $\calP$ if $\matching$ induces at most $\e \abs{E}$ blocking pairs with respect to $\calP$.
  \end{dfn}

  We refer to the problem of finding a matching which is $(1 - \e)$-stable for fixed $\e > 0$ as the \dft{almost stable matching problem}. Note that a $1$-stable matching corresponds precisely to the classical stable matching definition.

  \begin{rem}
    \label{rem:fkps}
    Again, we reiterate that there is no consensus in the literature on the precise definition of almost stability. For example, the authors of \cite{FKPS10} compare the number of blocking pairs to $\abs{\matching}$, the size of the matching rather than $\abs{E}$, as we do. Since in \cite{FKPS10}, they only consider bounded preference lists, their notion of almost stability agrees with Definition \ref{dfn:as} up to a constant factor.
  \end{rem}

  \begin{dfn}[Kipnis and Patt-Shamir \cite{KP09}]
    \label{dfn:eps-blocking}
    Given $\e \geq 0$, preferences $\calP$, and a matching $\matching$ we call an edge $(m, w) \in E$ \dft{$\e$-blocking} if $m$ and $w$ appear an $\e$-fraction higher on each other's preferences than their assigned partners. Specifically, $(m, w)$ is $\e$-blocking if
    \[
    P^m(p(m)) - P^m(w) \geq \e \deg m \quad\text{and}\quad P^w(p(w)) - P^w(m) \geq \e \deg w.
    \]
    We say that $\matching$ is \dft{$\e$-blocking-stable} if it contains no $\e$-blocking pairs.
  \end{dfn}

  \begin{rem}
    \label{rem:kp}
    Kipnis and Patt-Shamir \cite{KP09} prove an $\Omega(\sqrt{n} / \log n)$ round lower bound for finding an $\e$-blocking-stable matching. That we are able to achieve a polylogarithmic round algorithm for the almost stable matching problem using Definition \ref{dfn:as} bolsters the use of Definition \ref{dfn:as} for almost stability, at least for practical applications. Further, \ASM\ produces a matching which is nearly $\e$-blocking-stable in the sense that after the removal of an arbitrarily small fraction of ``bad'' men, the resulting matching is $\e$-blocking stable with respect to the remaining players.
  \end{rem}

  \subsection{Computational model}


    We describe \ASM\ in terms of the CONGEST model formalized by Peleg \cite{Peleg00}. In this distributed computational model, each player $v \in X \cup Y$ represents a processor. Given preferences $\calP$, the communication links between the players are given by the set of edges $E$ in the communication graph $\commgraph$. Communication is performed in synchronous rounds. Each communication round occurs in three stages: first, each processor receives messages (if any) sent from its neighbors in $\commgraph$ during the previous round. Next, each processor performs local calculations based on its internal state and any received messages. We make no restrictions on the complexity of local computations. Finally, each processor sends short ($O(\log n)$ bit) messages to its neighbors in $\commgraph$---the processor may send distinct messages to distinct neighbors. In the CONGEST model, complexity is measured by the number of communication rounds needed to solve a problem.

    \begin{rem}
      Although the CONGEST model allows for unbounded local computation during each round, the computations required by \ASM\ can be implemented in linear or near-linear time in each processor's input.
    \end{rem}

  \subsection{Maximal matchings}
  
  As a subroutine, \ASM\ requires a method for computing maximal matchings in a graph. 

  \begin{dfn}
    \label{dfn:maximal-matching}
    A matching $\matching$ is a \dft{maximal matching} if it is not properly contained in any larger matching. Equivalently, $\matching$ is maximal if and only if every $v \in V$ satisfies precisely one of the following conditions:
    \begin{enumerate}
    \item there exists a unique $u \in V$ with $(v, u) \in \matching$;
    \item for all $u \in N(v)$ there exists $v' \in V$ with $v' \neq v$ such that $(v', u) \in \matching$.
    \end{enumerate}
  \end{dfn}

  For the deterministic version of our algorithm, we invoke the work of Han\'n\'ckowiak, Karo\'nski, and Panconesi \cite{HKP01} who give a deterministic distributed algorithm which finds a maximal matching in a polylogarithmic number of rounds. 
  \begin{thm}[Han\'n\'ckowiak, Karo\'nski, and Panconesi \cite{HKP01}]
    \label{thm:max-matching}
    There exists a deterministic distributed algorithm, \MaximalMatching, which finds a maximal matching in a communication graph $G = (V, E)$ in $\log^4(n)$ rounds, where $n = \abs{V}$.
  \end{thm}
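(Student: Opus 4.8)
The plan is to present \MaximalMatching\ as an outer loop of $O(\log n)$ phases wrapped around a single deterministic primitive that, in polylogarithmic rounds, extracts a matching covering a constant fraction of the surviving edges. First I would fix the outer invariant: maintain a partial matching $M$, initially empty, and at the start of each phase restrict attention to the subgraph $G' = G[V \setminus V(M)]$ induced by the currently unmatched vertices. If in each phase we can augment $M$ by a matching $M'$ in $G'$ such that at least half of the edges of $G'$ are incident to $V(M')$, then $G'$ loses at least half of its edges per phase. Since $\abs{E} \leq n^2$, after $O(\log n)$ phases $G'$ becomes edgeless, i.e.\ no two unmatched vertices are adjacent, which by Definition~\ref{dfn:maximal-matching} means precisely that $M$ is maximal. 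Thus the whole problem reduces to the \emph{large-matching primitive}: deterministically, in $O(\log^3 n)$ rounds and respecting the $O(\log n)$-bit bandwidth of the CONGEST model, find a matching whose matched vertices cover a constant fraction of the edges.

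For the primitive I would first reduce to bipartite instances and then drive the maximum degree down to a constant by repeated \emph{degree splitting}. The key subroutine partitions the edge set $E' = E_0 \cup E_1$ so that every vertex retains close to half of its degree in each part; realized via Euler partitions of the connected components, this can be carried out deterministically, and iterating it $O(\log \Delta) = O(\log n)$ times isolates a subgraph of constant maximum degree that still carries a constant fraction of the original edges. On a graph of bounded degree a maximal matching covers all of its edges and can be computed cheaply by a deterministic low-degree symmetry-breaking routine (for instance iterated defective or edge coloring in $O(\log^* n + \mathrm{poly}(\Delta))$ rounds). Tracking the edges covered across the splitting recursion then produces a matching meeting the constant-fraction guarantee required by the outer loop.

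The main obstacle, and the part deserving the most care, is the degree-splitting subroutine: splitting degrees evenly \emph{without randomness} is exactly the symmetry-breaking difficulty that the classical Israeli--Itai approach resolves trivially with random bits. Realizing an even split distributedly amounts to orienting and pairing the edges at each vertex consistently across the network, which rests on computing the Euler partitions (equivalently, suitable orientations) locally within a polylogarithmic number of rounds and under the bandwidth constraint — the technical engine supplied by \cite{HKP01}. I would need to argue both that each split loses only a controlled fraction of edges (so that after $O(\log n)$ splits a constant fraction survives at bounded degree) and that the per-split round cost is polylogarithmic; multiplying the resulting per-phase cost $O(\log^3 n)$ by the $O(\log n)$ outer phases then yields the claimed $O(\log^4 n)$ bound. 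The remaining bookkeeping — the bipartite reduction, the constant-degree base case, and the edge-covering accounting — is routine by comparison.
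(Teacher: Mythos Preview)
The paper does not prove this theorem at all: it is stated with attribution to \cite{HKP01} and then used as a black box in the run-time analysis of \ASM\ (Theorem~\ref{thm:runtime}). There is therefore no proof in the paper to compare your proposal against.

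For what it is worth, your sketch is a broadly faithful outline of the actual argument in \cite{HKP01}: an outer loop of $O(\log n)$ phases, each of which finds a matching covering a constant fraction of the remaining edges, with the per-phase primitive built from iterated deterministic degree-splitting down to bounded degree followed by cheap symmetry breaking. You correctly identify the crux --- making the Euler-partition-based degree split deterministic and polylogarithmic in the distributed setting --- and you are right that this is where essentially all the work of \cite{HKP01} lies. Your write-up leaves that step as an appeal to \cite{HKP01} rather than an independent argument, so as a self-contained proof it is incomplete, but as a summary of the cited result it is accurate; in any case it far exceeds what the present paper attempts for this statement.
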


  We remark that while the authors of \cite{HKP01} do not explicitly use the CONGEST model for computation, their algorithm can easily be implemented in this model.

  The randomized variants of \ASM\ require faster (randomized) subroutines for computing maximal and ``almost maximal'' matchings in a communication graph. In Appendix \ref{sec:amm-proof}, we describe how to modify an algorithm of Israeli and Itai \cite{II86} to give the necessary results.

  \section{Deterministic algorithm description}
  \label{sec:description}

  In this section, we describe in detail the almost stable matching algorithm, \ASM. The main algorithm invokes the subroutine \QuantileMatch\ which in turn calls \ProposalRound. In Section \ref{sec:notation} we introduce notation, and describe the internal state of each processor during the execution of \ASM. Section \ref{sec:proposal-round} contains a description of the \ProposalRound\ subroutine, while Section \ref{sec:quantile-match} describes the \QuantileMatch\ subroutine. Finally, Section \ref{sec:asm} describes \ASM.

  \subsection{The state of a processor}
  \label{sec:notation}

  In our algorithm, we assume that each player is represented by an independent processor. Each processor has a unique id and a gender (male or female) both of which are known to that processor. The only global information known to each processor is (an upper bound on) $n$, the total number of processors. At each step of the algorithm, we specify the state of each processor as well as any messages the processor might send or receive. The state of a player $v$ consists of:
  \begin{itemize}
  \item Quantized preferences $Q_1, Q_2, \ldots, Q_k$ where we denote $Q = \bigcup Q_i$. Initially  $Q_1$ is the set of $v$'s $\deg v / k$ favorite men, $Q_2$ is her next favorite $\deg(v) / k$, and so on. We call $Q_i$ $w$'s \dft{$i$th quantile}. For $m \in Q_i$, we write $q(m) = i$. If we wish to make explicit the player to whom the preferences belong, we may adorn these symbols with a superscript. For example, $Q_i^v$ is $v$'s $i$th quantile. Throughout the execution of the algorithm, elements may be removed from $Q$ and the $Q_i$s, but elements will never be added to any of these sets. We will always use $k$ to denote the number of quantiles for the players' preferences.
  \item A partner $p$ (possibly empty). The partner $p$ is $v$'s current partner in the matching $M$ our algorithm constructs. To emphasize that $p$ is player $v$'s partner, we will write $p(v)$. The (partial) matching $M$ produced by the algorithm at any step is given by $M = \set{(p(w), w) \st w \in X, p(w) \neq \varnothing}$.
  \end{itemize}
  Additionally, subroutines of our algorithm will require each processor to store the following variables:
  \begin{itemize}
  \item A set $\commgraph_0$ of ``neighbors'' of the opposite sex which correspond to accepted proposals.
  \item A partner $p_0$ in a matching found in the graph determined by $\commgraph_0$.
  \end{itemize}
  Thus each player knows their preferences, partners (if any) as well as any of their accepted proposals from the current round (stored in $\commgraph_0$). The men $m \in Y$ hold the following additional information:
  \begin{itemize}
  \item A set $A$ of ``active'' potential mates, initially set to $Q_1$.
  \end{itemize}

  \subsection{The \ProposalRound\ subroutine}
  \label{sec:proposal-round}

  At the heart of our algorithm is the \ProposalRound\ subroutine (Algorithm \ref{alg:proposal-round}). \ProposalRound\ works in 5 steps which are described in Algorithm \ref{alg:proposal-round}.

  \begin{algorithm}
    \caption{\ProposalRound$(Q, k, A)$}
    \label{alg:proposal-round}
    \begin{description}
    \item[Step 1:] Each man $m$ proposes to all women in $A^m$ by sending each $w \in A$ the message PROPOSE.
    \item[Step 2:] Each women $w$ receiving proposals responds with the message ACCEPT to all proposals from her most preferred quantile $Q_i^w$ from which at least one man proposed in Step 1.
    \item[Step 3:] Let $\commgraph_0$ denote the bipartite graph $\commgraph_0$ of accepted proposals from Step 2.  The players compute a maximal matching $M_0$ in $\commgraph_0$, using \MaximalMatching$(\commgraph_0)$, storing their match in $\commgraph_0$ as $p_0$. 
    \item[Step 4:] Each woman $w$ matched in $\matching_0$ sends REJECT to all men $m \in Q^w$ in a lesser or equal quantile to her partner $p_0(w)$ in $M_0$ other than $p_0(w)$. She then removes all of these men from $Q^w$ and the corresponding $Q^w_i$. The matched women then set $p \leftarrow p_0$, so the partial matching $\matching$ now contains the edge $(p_0(w), w)$. Any man $m$ matched in $\matching_0$ sets $p \leftarrow p_0$ and sets $A \leftarrow \varnothing$. 
    \item[Step 5:] The men remove all $w$ from whom they received the message REJECT from their preferences $Q$, the various $Q_i$ and $A$. If a man $m$ receives a rejection from his match $p(m)$ from a previous round, he sets $p \leftarrow \varnothing$.
    \end{description}
  \end{algorithm}

  We observe that if each player $v$ takes $k = \deg v$, then \ProposalRound\ mimics the classical (extended) Gale-Shapley algorithm \cite{GS62, GI89}. In this case, each man proposes to his most favored woman that has not yet rejected him, and each woman rejects all but her most favored suitor. Computing a maximal matching is trivial, as the accepted proposals already form a matching. The general case has one crucial feature in common with the Gale-Shapley algorithm, which follows immediately from the description of \ProposalRound.

  \begin{lem}[Monotonicity]
    \label{lem:w-monotonic}
    Once a woman $w$ has $p(w) \neq \varnothing$ in some execution of \ProposalRound, she is guaranteed to always have $p(w) \neq \varnothing$ after each subsequent execution of \ProposalRound. Further, once matched, she will only accept proposals from men in a strictly higher quantile than her current match, $p(w)$.
  \end{lem}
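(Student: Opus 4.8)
The plan is to trace through the five steps of \ProposalRound\ and argue that the variable $p(w)$ for a woman is ``sticky'' and only ever moves to a strictly better (higher quantile) man. First I would establish the base case: the only step in which a woman's partner $p(w)$ is set is Step~4, and there she sets $p \leftarrow p_0$ where $p_0(w)$ is her match in the maximal matching $M_0$ computed on the graph $\commgraph_0$ of accepted proposals. So immediately after the first execution of \ProposalRound\ in which $w$ is matched in $M_0$, we have $p(w) = p_0(w) \neq \varnothing$. The key observation is that a woman's partner $p(w)$ is never reset to $\varnothing$ anywhere in \ProposalRound: only men reset their own partner to $\varnothing$ (Step~5, upon receiving a rejection from a current match), and only women send such rejections (Step~4), never receive them in a way that clears $p(w)$. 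So once $p(w) \neq \varnothing$, it stays non-empty; this gives the first claim.

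Next I would prove the quantile-improvement claim. Suppose $w$ enters some later execution of \ProposalRound\ with current partner $p(w) = m$ in quantile $q_w(m) = j$. In Step~2, $w$ responds ACCEPT only to men in her most preferred quantile $Q_i^w$ from which a proposal arrived. The crucial point is that $m$ itself is no longer in $Q^w$: in the round in which $w$ was matched to $m$, Step~4 had her remove from $Q^w$ (and each $Q_i^w$) all men in a quantile less than or equal to $m$'s, other than $m$ himself — but also, I need to check that $m$ was removed as well, or argue directly that $w$ will never again accept anyone in quantile $\geq j$. Actually the cleaner route: in Step~4, $w$ sends REJECT to (and deletes) every man in a quantile $\leq j$ other than $p_0(w) = m$; and a subsequent ACCEPT to some $m'$ requires $m' \in Q_i^w$ for $w$'s best quantile receiving a proposal. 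Any such surviving $m'$ in $Q^w$ must lie in a quantile strictly greater than $j$ (everyone in quantiles $\le j$ except $m$ was deleted, and $m$ does not re-propose since matched men have $A = \varnothing$ by Step~4 and Step~5 keeps it so until a rejection). Hence when $w$ accepts $m'$ and subsequently (after the maximal matching step) possibly updates $p(w) \leftarrow m'$, we have $q_w(m') > j = q_w(p(w))$, i.e., she only upgrades to a strictly higher quantile. Iterating over executions of \ProposalRound\ gives the claim for all subsequent rounds.

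The main obstacle I anticipate is the bookkeeping around which men remain in $Q^w$ and in $w$'s active acceptance pool across rounds — in particular making sure that a man $m$ who was $w$'s partner cannot ``come back'' and that no man in a quantile $\le j$ survives in $Q^w$ to be accepted later. This requires carefully combining the deletion in Step~4 (woman's side) with the facts that (i) matched men set $A \leftarrow \varnothing$ and only repopulate interest after being rejected, and (ii) rejected men delete the rejecting woman from $Q$ and $A$ in Step~5, so a man rejected by $w$ never proposes to her again. Once these invariants are spelled out, the lemma follows directly from the description of \ProposalRound\ with no real computation; the argument is essentially an induction on the number of completed executions of the subroutine.
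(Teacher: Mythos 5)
Your proof is correct and matches the paper's approach: the paper gives no explicit proof, stating only that the lemma ``follows immediately from the description of \ProposalRound,'' and your step-by-step trace (women never clear $p$; Step~4 rejects and deletes every man weakly below the new partner, so rejected men never re-propose and matched men keep $A = \varnothing$) is exactly the verification the paper leaves implicit. The only nitpick is a notational inversion --- in the paper's convention $Q_1$ is the \emph{most} preferred quantile, so ``strictly higher quantile'' means strictly \emph{smaller} index, whereas you write ``strictly greater than $j$'' --- but your argument is internally consistent and the substance is right.
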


  \subsection{The \QuantileMatch\ subroutine}
  \label{sec:quantile-match}

  Here we describe the \QuantileMatch\ subroutine (Algorithm \ref{alg:quantile-match}), which simply iterates \ProposalRound\ until each man $m$ has either been rejected by all women in $A^m$ or is matched with some woman in $A^m$. In either case, $A^m = \varnothing$ when \QuantileMatch\ terminates. We will argue that $k$ (the number of quantiles) iterations suffice.

  \begin{algorithm}
    \caption{\QuantileMatch$(Q, k)$}
    \label{alg:quantile-match}
    \begin{algorithmic}
      \STATE $i \leftarrow \min \set{i \st Q_i \neq \varnothing}\cup\set{k}$ (male only)
      \IF{$p = \varnothing$}
      \STATE $A \leftarrow Q_i$ (male only)
      \ENDIF
      \FOR{$i \leftarrow 1$ to $k$}
      \STATE \ProposalRound$\paren{Q, k, A}$
      \ENDFOR
    \end{algorithmic}
  \end{algorithm}    

  \begin{lem}[\QuantileMatch\ guarantee]
    \label{lem:quantile-match}
    At the termination of \QuantileMatch$(Q,k)$ every man $m$ satisfies $A^m = \varnothing$. In particular, each man who had $A^m \neq \varnothing$ before the first iteration of the loop in \QuantileMatch\ has either been rejected by all women in $A^m$ or is matched with some woman in $A^m$. 
  \end{lem}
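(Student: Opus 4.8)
The plan is to track, for each man $m$, how his "active" set $A^m$ evolves over the $k$ iterations of \ProposalRound, and to argue that between consecutive iterations either $A^m$ becomes empty (and stays empty) or the quantile index from which $A^m$ is drawn strictly increases. Since there are only $k$ quantiles, after $k$ iterations $A^m$ must be empty. First I would record the invariant maintained at the start of each loop iteration: for every man $m$, either $p(m) \neq \varnothing$ and $A^m = \varnothing$, or $p(m) = \varnothing$ and $A^m = Q^m_{j}$ for $j = \min\set{i \st Q^m_i \neq \varnothing}$ — i.e., $A^m$ is exactly his current best nonempty quantile. This holds before the first iteration by the initialization code of \QuantileMatch, so the base case is immediate.

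Next I would verify the invariant is preserved by one call to \ProposalRound, which is where the real work lies. Consider a man $m$ with $A^m = Q^m_j \neq \varnothing$ entering \ProposalRound. In Step 1 he proposes to all of $Q^m_j$. There are three cases. (i) If $m$ is matched in $M_0$ in Step 3, then in Step 4 he sets $A^m \leftarrow \varnothing$ and $p(m) \neq \varnothing$; by Lemma \ref{lem:w-monotonic} his partner never abandons him unilaterally, and the only way $p(m)$ changes is via a rejection (Step 5), after which he is back in the "unmatched, $A^m = Q^m_{j'}$" branch at the top of the next iteration — but note each such new partner lies in a strictly higher quantile of the woman, which is the feature we exploit for \ASM\ later, though for this lemma it suffices that $A^m$ is reset correctly. (ii) If $m$ is not matched in $M_0$ but receives no REJECT, I must argue this cannot happen: every woman $w \in Q^m_j = A^m$ received at least one proposal (namely $m$'s), so in Step 2 she accepts the proposals from her best proposing quantile; thus every $w \in A^m$ has nonempty accepted-degree in $\commgraph_0$, so by maximality of $M_0$ either $w$ is matched or all her $\commgraph_0$-neighbors are matched — in particular, if $m$ himself is unmatched in $M_0$, then each $w \in A^m$ is matched in $M_0$, hence (Step 4) sends REJECT to $m$ whenever $m$ lies in a quantile of $w$ at or below that of $p_0(w)$. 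The one subtlety is whether $m$ could fail to be rejected because he sits in a strictly higher quantile of $w$ than $p_0(w)$; but $w$ only ever accepted (in Step 2) proposals from a single quantile in the current round, so all of $w$'s $\commgraph_0$-neighbors — including $p_0(w)$ and $m$ — are in the same quantile of $w$, so $m$ is indeed at-or-below $p_0(w)$ and gets rejected. Hence every $w \in A^m$ rejects $m$, so in Step 5 all of $Q^m_j$ is removed from $Q^m$ and from $A^m$, leaving $A^m = \varnothing$. (iii) If $m$ is rejected by some but not all of $A^m$, the rejecting women are removed from $Q^m$ and $A^m$; the remaining women in $A^m$ are precisely those that did not reject him, and the analysis of the next iteration proceeds with this smaller $A^m \subseteq Q^m_j$.

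Then I would close the induction: after \ProposalRound, at the top of the next loop iteration the male-only line $i \leftarrow \min\set{i \st Q^m_i \neq \varnothing} \cup \set{k}$ together with the $\mathbf{if}\ p = \varnothing$ guard re-establishes the invariant — either $m$ is now matched (so $A^m = \varnothing$ from Step 4 and stays so since the guard is false), or $m$ is unmatched and $A^m$ is reset to his current best nonempty quantile. The key quantitative point is that an unmatched man's best nonempty quantile index is non-decreasing across iterations and strictly increases whenever $A^m$ was entirely rejected in case (ii): once all of $Q^m_j$ is deleted, $\min\set{i \st Q^m_i \neq \varnothing} \geq j+1$. Since there are only $k$ quantiles, after at most $k$ iterations any man who remains unmatched has exhausted all quantiles, i.e. $Q^m = \varnothing$ and hence $A^m = \varnothing$; and any man who becomes matched has $A^m = \varnothing$ permanently. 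Either way $A^m = \varnothing$ at termination, and tracing back, a man with $A^m \neq \varnothing$ before the loop was unmatched with $A^m = Q^m_{i_0}$ for some fixed $i_0$, and the only ways his $A$ emptied were by being matched to some $w \in A^m$ (case i) or by being rejected by every $w \in A^m$ (case ii, possibly spread over several iterations via case iii) — which is exactly the claimed dichotomy.

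The main obstacle I anticipate is case (ii): ruling out the possibility that an unmatched-in-$M_0$ man escapes rejection. The argument hinges on two facts that must be stated carefully — that in Step 2 each woman accepts from \emph{exactly one} quantile, so $\commgraph_0$ restricted to $w$'s neighborhood is monochromatic in $w$'s quantiles, and that maximality of $M_0$ forces every woman with a nonempty accepted-degree either into $M_0$ or to have all her accepted suitors matched. Getting the quantile-comparison direction right in Step 4 ("lesser or equal quantile to her partner") is the delicate bookkeeping step; everything else is routine induction.
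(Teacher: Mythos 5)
There is a genuine gap in your case (ii), and it breaks the termination bound. You argue that if $m$ is unmatched in $M_0$ then every $w \in A^m$ is matched in $M_0$ and rejects him. But maximality of $M_0$ only constrains $w$ through her $\commgraph_0$-neighbors, and $m$ is a $\commgraph_0$-neighbor of $w$ only if $w$ \emph{accepted} his proposal in Step 2 --- which she does not do when $m$ lies outside her best proposing quantile. Such a $w$ can end up unmatched in $M_0$ (all of her accepted suitors matched elsewhere), in which case she sends no REJECT at all and $m$ retains her in $A^m$. This is exactly your case (iii), and for it you offer no progress measure: the man's active quantile index does not advance (indeed, within a single call to \QuantileMatch\ it never advances --- note that the reset $A \leftarrow Q_i$ happens once \emph{before} the inner for-loop, not at the top of each iteration, so $A^m$ only shrinks and always sits inside the one quantile $m$ started with). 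Your claim that ``after at most $k$ iterations any man who remains unmatched has exhausted all quantiles'' therefore does not follow; a man could in principle be stuck in case (iii) with a nonempty $A^m$ for many iterations as far as your argument shows.

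The paper closes this by counting from the women's side rather than the men's. Fix a woman $w$ and let $q_t$ be her best quantile receiving proposals in iteration $t$. If she is matched in iteration $t$ she rejects every proposer (all of whom lie in quantile $q_t$ or worse) and, since active sets only shrink, receives no further proposals. If she is unmatched, then every accepted suitor --- i.e.\ every man in quantile $q_t$ who proposed --- must be matched in $M_0$ by maximality, hence sets $A \leftarrow \varnothing$ and never proposes to her again; so $q_{t+1} > q_t$ strictly. After $k$ iterations no woman receives any proposal, which forces $A^m = \varnothing$ for every man. Your man-centric induction and the three-case analysis are a reasonable skeleton, and your use of maximality in the sub-case where $w$ accepted $m$'s proposal is correct, but the per-woman monotonicity of the best proposing quantile is the missing idea that actually yields the bound of $k$ iterations.
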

  \begin{proof}
    Suppose a woman $w$ receives proposals in the first iteration of the loop in \QuantileMatch. If she is matched with one of these suitors when \ProposalRound\ terminates, she rejects all other men and receives no further proposals during the current \QuantileMatch. On the other hand, if she is not matched with one of these suitors after the first round, then by the maximality of the matching found in Step 3 of \ProposalRound, all of the suitors in her best quantile receiving proposals are matched with other women. Thus, in the next iteration, she only receives proposals from men in strictly worse quantiles than she accepted in the first. Similarly, in each iteration of the loop, her best quantile receiving proposals (if any) is strictly worse than the previous iteration. Therefore, after $k$ iterations, no woman will receive proposals, hence each man $m$ must have $A^m = \varnothing$.
  \end{proof}

  \subsection{The \ASM\ algorithm}
  \label{sec:asm}

  In this section, we describe the main algorithm \ASM\ (Algorithm \ref{alg:asm}). The idea of \ASM\ is to iterate \QuantileMatch\ until a large fraction men with high degree are either matched or have been rejected by all acceptable partners. We call such men \dft{good}. By iterating \QuantileMatch\ a constant number of times, we can ensure that the fraction of good men is close to $1$. In order to bound the number of blocking pairs from men which are \dft{bad} (not good), we must ensure that bad men comprise only a small fraction of players with relatively high degree. To this end, we only allow men who are potentially involved in many blocking pairs (that is, with $\abs{Q}$ relatively large) to participate in later calls to \QuantileMatch. 

  \begin{algorithm}
   \caption{\ASM$(P, \e, n)$}
   \label{alg:asm}
    \begin{algorithmic}
      \STATE $k \leftarrow \ceil{8 \e^{-1}},\ \delta \leftarrow \e / 8$
      \FORALL{$i \leq k$}
      \STATE $Q_i \leftarrow \set{v \st q(v) = i}$
      \ENDFOR
      \STATE $Q \leftarrow \bigcup_i Q_i,\ p \leftarrow \varnothing$
      \FOR{$i \leftarrow 0$ to $\log n$}
      \IF{$\abs{Q} \geq 2^i$}
      \FOR{$j \leftarrow 1$ to $2 \delta^{-1} k$}
      \STATE \QuantileMatch$(Q, k)$
      \ENDFOR
      \ENDIF
      \ENDFOR
    \end{algorithmic}
  \end{algorithm}

  \section{Performance guarantees}
  \label{sec:performance}

  Here we analyze the performance of \ASM\ and its subroutines. The run-time guarantee (Theorem \ref{thm:runtime}) is a simple consequence of the description of \ASM\ and its subroutines. To prove the approximation guarantee (Theorem \ref{thm:approximation-guarantee}), we consider blocking edges from two sets of men separately. We call a man $m$ \dft{good} if when \ASM\ terminates, he is either matched or has been rejected by all of his acceptable partners. A man who is not good is \dft{bad}. We denote the sets of good and bad men by $G$ and $B$, respectively. 

  \begin{thm}[Approximation guarantee]
    \label{thm:approximation-guarantee}
    The matching $\matching$ output by \ASM\ induces at most $\e \abs{E}$ blocking pairs with respect to $\calP$. Thus $M$ is $(1 - \e)$-stable.
  \end{thm}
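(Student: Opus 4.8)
The plan is to split the blocking pairs of $\matching$ into two classes---those incident to a good man and those incident only to bad men---and bound each class by roughly $\tfrac{\e}{2}\abs{E}$.

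First I would handle the \emph{good men}. The key point is that quantizing preferences into $k = \ceil{8\e^{-1}}$ quantiles limits how ``wrong'' a matched or rejected man can be about any potential blocking partner. Fix a good man $m$ and an acceptable partner $w$ such that $(m,w)$ is a blocking pair. Since $m$ is good, either $m$ is matched with some $w' = p(m)$ he (weakly) prefers to $w$ up to quantile granularity, or $m$ was rejected by every woman in $Q^m$---in particular by $w$---which by the description of \ProposalRound\ (Step 4) and the Monotonicity Lemma (Lemma~\ref{lem:w-monotonic}) means $w$ is matched with a man she prefers to $m$ at the time of rejection, and she never subsequently accepts anyone worse. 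In either case, for $(m,w)$ to block, the discrepancy must come from $w$ and $m$ lying in the \emph{same} quantile of the relevant preference list: concretely, I would argue that if $(m,w)$ blocks, then either $q^m(w) = q^m(p(m))$ or $q^w(m) = q^w(p(w))$ (or the rejection analogue). Counting: for each man $m$, the number of $w$ in a single quantile $Q_i^m$ is $\deg m / k$, so the number of blocking pairs ``charged'' to $m$ via his own quantile coincidence is at most $\deg m/k$; summing over all men gives $\tfrac{1}{k}\sum_m \deg m = \abs{E}/k \le \tfrac{\e}{8}\abs{E}$. The symmetric count charged to women via their quantiles gives another $\abs{E}/k$. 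So blocking pairs incident to a good man total at most $2\abs{E}/k \le \tfrac{\e}{4}\abs{E}$.

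Next I would bound blocking pairs all of whose incident men are \emph{bad}. Each such blocking pair $(m,w)$ satisfies $m \in B$ and $w \succ_m p(m)$, so $w \in Q^m$ (the pair survives in $m$'s preferences), hence the number of such pairs is at most $\sum_{m \in B} \abs{Q^m}$. Here I invoke the structure of the outer loop of \ASM: the loop over $i$ from $0$ to $\log n$ only runs \QuantileMatch\ on a man $m$ while $\abs{Q^m} \ge 2^i$, i.e.\ $m$ keeps participating only if his surviving preference list stays large. The engine of the argument is that each full block of $2\delta^{-1}k$ calls to \QuantileMatch\ (with $\delta = \e/8$) makes the fraction of still-bad \emph{participating} men at most $\delta$: a standard potential/averaging argument using Lemma~\ref{lem:quantile-match} (after \QuantileMatch, every man has $A^m = \varnothing$, so a man who is still bad has had his ``active'' quantile exhausted and moves to a strictly worse quantile, and $k$ such events exhaust him) shows each \QuantileMatch\ either makes a man good or strictly advances his active-quantile index, so after $k$ rounds---well within $2\delta^{-1}k$---and a Markov-type bound, at most a $\delta$ fraction of the participating men remain bad. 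Consequently, among men with $2^i \le \abs{Q^m} < 2^{i+1}$ at the start of phase $i$, at most a $\delta$ fraction are bad, contributing at most $\delta \cdot 2^{i+1}$ per such man; but more to the point, a bad man with $\abs{Q^m} \ge 2^i$ surviving after phase $i$ had probability-of-being-bad at most $\delta$ conditioned on participating, and this compounds so that $\sum_{m \in B} \abs{Q^m}$ telescopes geometrically against $\sum_{m} \deg m = \abs{E}$, yielding $\sum_{m\in B}\abs{Q^m} \le O(\delta)\abs{E} \le \tfrac{\e}{4}\abs{E}$ for the chosen constants.

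Adding the two bounds gives at most $\tfrac{\e}{4}\abs{E} + \tfrac{\e}{4}\abs{E} = \tfrac{\e}{2}\abs{E} \le \e\abs{E}$ blocking pairs, so $\matching$ is $(1-\e)$-stable by Definition~\ref{dfn:as}. The main obstacle I anticipate is the bad-men bound: making rigorous the claim that each batch of \QuantileMatch\ calls drives the bad fraction down by $\delta$ \emph{and} that the geometric doubling schedule on $\abs{Q}$ correctly amortizes $\sum_{m\in B}\abs{Q^m}$ against $\abs{E}$---in particular controlling men whose $\abs{Q^m}$ shrinks across phases and ensuring the constants $k = \ceil{8\e^{-1}}$, $\delta = \e/8$, and the $2\delta^{-1}k$ iteration count line up. The good-men count is comparatively routine once the quantile-coincidence dichotomy is pinned down precisely, which itself requires care about the rejection case and the timing guaranteed by Lemma~\ref{lem:w-monotonic}.
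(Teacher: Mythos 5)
Your decomposition is the paper's: blocking pairs touching good men are controlled by quantile coincidence (the paper's Lemmas \ref{lem:eps-blocking-good} and \ref{lem:good-blocking}: good men lie in no $(2/k)$-blocking pairs, and there are at most $4\abs{E}/k$ blocking pairs that are not $(2/k)$-blocking), and blocking pairs from bad men are bounded by $\sum_{m\in B}\abs{Q^m}$. The good-men count is essentially right. But the two steps you flag as obstacles are exactly where your argument, as written, fails. Your mechanism for the ``few bad men'' bound --- ``each \QuantileMatch\ either makes a man good or strictly advances his active-quantile index, so after $k$ rounds \dots\ at most a $\delta$ fraction remain bad'' --- is false: a man matched during a call to \QuantileMatch\ sets $A \leftarrow \varnothing$ and can later be dumped by his partner when she receives a proposal from a strictly better quantile (Step 4 of \ProposalRound\ together with Lemma \ref{lem:w-monotonic}); he then re-enters with the \emph{same} best nonempty quantile, since being matched did not cause the other women in that quantile to reject him. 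If every call really did strictly advance each bad man's quantile index, all men would be good after $k$ calls, contradicting the $\Omega(\sqrt n/\log n)$ lower bound of Kipnis and Patt-Shamir that the paper invokes to explain why some men must remain bad. The paper's Lemma \ref{lem:few-bad-men} instead uses a global amortized count: every man who is bad at the start of a \QuantileMatch\ call witnesses one ``quantile rejection'' during it (he is either rejected by his entire active quantile, or his new match rejects her entire quantile containing him); the total supply of such events over one pass of the inner loop is at most $2k\abs{\calA}$, and the number of bad men is non-increasing, so $b\cdot 2\delta^{-1}k \le 2k\abs{\calA}$ forces $b \le \delta\abs{\calA}$.

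The second gap is the amortization of $\sum_{m\in B}\abs{Q^m}$ against $\abs{E}$. The ``compounding'' and ``probability-of-being-bad conditioned on participating'' language has no content for a deterministic algorithm, and nothing in your sketch prevents all the bad men from having large $\abs{Q^m}$. The paper's Lemma \ref{lem:bad-guarantee} makes this step work by partitioning $B = B_1\cup\cdots\cup B_{\log n}$ according to the outer-loop iteration in which a man becomes permanently inactive (so $m\in B_i$ has $\abs{Q^m}<2^i$ and contributes at most $2^i$ pairs by Lemma \ref{lem:few-bad-blockers}), and then charging each $B_i$ to a \emph{disjoint} set $H_i$ of good men who were still active (hence had $\abs{Q}\ge 2^i$) in iteration $i$, with $\abs{H_i} = \frac{1-\delta}{\delta}\abs{B_i}$; summing the charges gives $\sum_{m\in B}\abs{Q^m}\le \frac{2\delta}{1-\delta}\abs{E}$. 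A smaller slip: it is not true that every blocking pair $(m,w)$ with $m$ bad has $w\in Q^m$. If $w$ rejected $m$, she may terminate with a partner in the same quantile as $m$ whom she ranks below $m$, which still blocks; such pairs are merely not $(2/k)$-blocking, which is why the paper counts all non-$(2/k)$-blocking pairs globally and reserves the $\sum_{m\in B}\abs{Q^m}$ bound for the $(2/k)$-blocking pairs only.
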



  \subsection{Bounding blocking pairs from good players}
  \label{sec:good}

  We bound the number of blocking pairs from good men in two steps. First we show that the good men are not involved in any $(2/k)$-blocking pairs (see Definition \ref{dfn:eps-blocking}). Next, we show that as a result, the good men can only be incident with a small fraction of blocking pairs.

  \begin{lem}[$(2/k)$-blocking-stability of good men]
    \label{lem:eps-blocking-good}
    Let $m \in G$ be good. Then $m$ is not incident with any $(2/k)$-blocking pairs.
  \end{lem}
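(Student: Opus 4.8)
The plan is to unpack what it means for a good man $m$ to participate in a $(2/k)$-blocking pair, and to derive a contradiction from the structure of \QuantileMatch\ and \ProposalRound. Suppose $(m,w) \in E$ is $(2/k)$-blocking, so that $P^m(p(m)) - P^m(w) \geq (2/k)\deg m$ and $P^w(p(w)) - P^w(m) \geq (2/k)\deg w$ (with the usual convention for unmatched players). Since each quantile $Q_i$ has size $\deg v / k$, the first inequality says that $w$ lies at least two full quantiles above $p(m)$ in $m$'s original preferences, i.e. $q^m(w) \leq q^m(p(m)) - 2$; similarly $q^w(m) \leq q^w(p(w)) - 2$. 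In particular $w$ was never deleted from $Q^m$ before $m$ reached his current state — if she had been, it would have been as a result of a REJECT in Step 4, but a woman only rejects men in a quantile $\leq$ that of her Step-3 partner, and by Monotonicity (Lemma \ref{lem:w-monotonic}) her partner only improves, so $m$ being in a strictly higher quantile ($q^w(m) \leq q^w(p(w)) - 2$) means $w$ never rejected $m$.

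The key step is then to use the fact that $m$ is good. Being good, when \ASM\ terminates $m$ is either matched or has been rejected by all acceptable partners; since $w$ never rejected $m$, the latter is impossible, so $m$ is matched with $p(m)$, and moreover $m$ must have had $A^m = \varnothing$ at the end of the last \QuantileMatch\ in which he participated — which by Lemma \ref{lem:quantile-match} means that in the \QuantileMatch\ call where $m$'s active set was the quantile $Q_{q^m(w)}^m$ (or an even better one), he was either matched within that quantile or rejected by everyone in it. Either way, $m$ proposed to $w$ at some point (in Step 1 of some \ProposalRound, since $w \in A^m$ when $A^m = Q^m_{q^m(w)}$ and $w$ was never removed). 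So consider the round in which $m$ proposes to $w$. At that point $w$ has not yet rejected $m$ and will not later, so $w$ must accept $m$ in Step 2 — unless $w$ is already matched to someone in a strictly better quantile. But $q^w(m) \leq q^w(p(w)) - 2$, so $m$ lies strictly above $w$'s final partner $p(w)$; by Monotonicity, $w$'s partner at the time of $m$'s proposal is no better than $p(w)$, hence strictly worse than $m$. Therefore $w$ accepts $m$'s proposal, $m$ enters $\commgraph_0$ paired-or-available to $w$, and the maximal matching in Step 3 either matches $m$ to $w$ or matches $m$ to someone else in $Q^w$ that $w$ prefers; in the former case $w$ would henceforth reject everyone worse than $m$ including her eventual $p(w)$, contradiction; in the latter case $w$ ends the round matched to someone she weakly prefers to $m$, and by Monotonicity this persists, again contradicting $m \succ_w p(w)$.

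The main obstacle — and the place to be careful — is the bookkeeping across the multiple nested loops: I need to pin down exactly which \QuantileMatch\ iteration $m$ proposes to $w$ in, and confirm that $m$'s active set $A^m$ does reach the quantile containing $w$ before \ASM\ terminates. This relies on $m$ being good (so his proposals are "completed" — $A^m$ empties only via matching or exhaustive rejection within a quantile), on the fact that $w$ is never removed from $Q^m$ (established above from $q^w(m)$ being low), and on the \QuantileMatch\ guarantee that successive active quantiles are exhausted one at a time. I would present the argument as: (i) translate $(2/k)$-blocking into "two quantiles apart on both sides"; (ii) show $w$ never rejects $m$ and $m$ never rejects $w$; (iii) conclude $m$ eventually proposes to $w$ in some \ProposalRound; (iv) show $w$ accepts, using Monotonicity to compare $w$'s current partner with $p(w)$; (v) derive a contradiction with $m \succ_w p(w)$ from Step 3/Step 4 and Monotonicity. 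Steps (ii) and (iv) are where Monotonicity does all the work, and step (iii) is where goodness is essential.
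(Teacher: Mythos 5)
Your overall strategy is the same as the paper's: translate the $(2/k)$-blocking condition into a quantile gap on both sides, use monotonicity of the women's matches to control rejections, and use goodness to force a contradiction. Your step (ii) --- that under the blocking hypothesis $w$ can never have rejected $m$, since a rejection would leave $w$ matched (then and, by Lemma \ref{lem:w-monotonic}, forever after) in a quantile no better than $m$'s --- is correct and is exactly the paper's central observation, run in the contrapositive direction. The paper closes immediately from there: goodness rules out ``rejected by all acceptable partners,'' so $m$ ends up matched to some $p(m)$ lying in a strictly worse quantile than $w$; but \QuantileMatch\ only ever sets $A^m$ to $m$'s best \emph{nonempty} quantile, so $m$ could only have reached $p(m)$'s quantile after $w$ was deleted from $Q^m$, i.e.\ after $w$ rejected him --- contradiction. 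You gesture at this (``successive active quantiles are exhausted one at a time'') but then take a longer route through Steps 2--4 of \ProposalRound, and that route has two genuine flaws.

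First, in your step (iv) you claim $w$ must accept $m$'s proposal ``unless $w$ is already matched to someone in a strictly better quantile.'' That is not the rule: in Step 2 a woman accepts only her best quantile \emph{from which someone proposed in that round}, so $w$ may decline $m$ simply because a rival in a better quantile proposed simultaneously, whether or not she is currently matched. Second, in step (v) you assert that if the accepted edge $(m,w)$ is not placed in $M_0$ then ``$w$ ends the round matched to someone she weakly prefers to $m$.'' Maximality of $M_0$ only guarantees that \emph{at least one} endpoint of $(m,w)$ is matched; it can be $m$ who is matched (to some other woman in his active quantile) while $w$ remains unmatched and rejects no one, so no contradiction with $m \succ_w p(w)$ arises in that round. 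Both holes are repairable, but the natural repair is to abandon the round-by-round accounting and fall back on the argument above: $w$ is never removed from $Q^m$, hence $m$'s active set can never advance past $w$'s quantile, hence $m$ can neither end up matched in a strictly worse quantile nor be rejected by all acceptable partners, contradicting goodness.
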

  \begin{proof}
    Suppose $m \in G$ and that $(m, w)$ is $(2/k)$-blocking. First consider the case where $m$ is matched, $p(m) \neq \varnothing$. Since $m$'s preferences are divided into $k$ quantiles, $w$ must be in a strictly better quantile than $p(m)$. Thus, $m$ must have proposed to $w$ in a strictly earlier call to \QuantileMatch\ than the call in which he was matched with $p(m)$. Thus, by Lemma \ref{lem:quantile-match}, $m$ must have been rejected by $w$, implying that $w$ was matched with a man $m'$ in the same or better quantile than $m$ in this round. By Lemma \ref{lem:w-monotonic}, $w$'s partner when \ASM\ terminates is at least as desirable as $m'$. This contradicts that $(m, w)$ is $\e$-blocking.

    On the other hand, if $p(m) = \varnothing$, then since $m$ is good, he must have been rejected by all of his acceptable partners, and in particular, by $w$. Thus, as in the previous paragraph, $w$ must be matched with a man in the same or better quantile than $m$. 
  \end{proof}

  \begin{lem}[Few non-$(2/k)$-blocking pairs]
    \label{lem:good-blocking}
    There are at most $4 \abs{E} / k$ blocking pairs which are not $(2/k)$-blocking.
  \end{lem}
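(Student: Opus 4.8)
The plan is to charge each non-$(2/k)$-blocking blocking pair to the ``man's side'' or the ``woman's side'' according to which of the two defining inequalities in Definition~\ref{dfn:eps-blocking} it violates, and then bound each of the two resulting quantities by $2\abs{E}/k$ via a counting argument.

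First I would record that if $(m,w)$ is a blocking pair then $m \succ_w p(w)$ and $w \succ_m p(m)$, so $P^m(p(m)) - P^m(w) > 0$ and $P^w(p(w)) - P^w(m) > 0$, where we use the convention that an unmatched player ranks the partnerless state strictly worst, so the relevant difference is $+\infty$ when $p(m) = \varnothing$ (resp.\ $p(w) = \varnothing$). Consequently a blocking pair fails to be $(2/k)$-blocking precisely when it violates at least one of $P^m(p(m)) - P^m(w) \geq (2/k)\deg m$ or $P^w(p(w)) - P^w(m) \geq (2/k)\deg w$. Let $B_1$ be the set of blocking pairs violating the first inequality and $B_2$ the set violating the second; every non-$(2/k)$-blocking blocking pair lies in $B_1 \cup B_2$, so it suffices to prove $\abs{B_1} \leq 2\abs{E}/k$ and $\abs{B_2} \leq 2\abs{E}/k$.

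Next I would bound $\abs{B_1}$ by summing over the men. Fix a man $m$; note that $(m,w) \in B_1$ forces $m$ to be matched. The women $w$ with $(m,w) \in B_1$ are exactly those whose rank $P^m(w)$ lies in the open interval $(P^m(p(m)) - (2/k)\deg m,\ P^m(p(m)))$. Since ranks are distinct positive integers and $P^m(p(m))$ is an integer, the number of integers in this interval is at most $\lceil (2/k)\deg m \rceil - 1 \leq (2/k)\deg m$ (in particular the interval is empty when $(2/k)\deg m \leq 1$). Summing over all men and using $\sum_m \deg m = \abs{E}$ (the sum of degrees on one side of the bipartite communication graph $\commgraph$ equals the number of edges) yields $\abs{B_1} \leq (2/k)\abs{E}$. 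The bound $\abs{B_2} \leq (2/k)\abs{E}$ follows by the identical argument with the roles of men and women exchanged, using $\sum_w \deg w = \abs{E}$. Adding the two bounds gives at most $4\abs{E}/k$ non-$(2/k)$-blocking blocking pairs, as claimed.

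This is essentially a double-counting exercise, so I do not expect a serious obstacle; the only points needing a little care are the integer-counting step (an open interval of length $(2/k)\deg m$ with an integer right endpoint contains at most $(2/k)\deg m$ integers, including the empty case) and the correct treatment of unmatched players via the partnerless-is-worst convention, which ensures that an unmatched player never contributes to $B_1$ (resp.\ $B_2$) through the side on which it is unmatched.
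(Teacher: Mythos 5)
Your proof is correct and follows essentially the same route as the paper's: split the non-$(2/k)$-blocking blocking pairs according to which of the two inequalities in Definition~\ref{dfn:eps-blocking} fails, observe that each man (resp.\ woman) can be the ``violating side'' of at most $(2/k)\deg m$ (resp.\ $(2/k)\deg w$) such pairs, and sum degrees over each side of the bipartite graph to get $2\abs{E}/k + 2\abs{E}/k$. Your version is slightly more careful than the paper's about the integer-counting step and the convention for unmatched players, but the argument is the same double count.
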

  \begin{proof}
    Suppose $(m, w)$ is a blocking pair which is not $(2/k)$-blocking. Thus, we have
    \begin{equation}
      \label{eqn:eps-blocking}
      P^m(w) - P^m(p(m)) \leq 2 \deg(m) / k \quad\text{or}\quad P^w(m) - P^w(p(w)) \leq 2 \deg(w) / k,
    \end{equation}
    where by convention we take $P^m(\varnothing) = \deg(m) + 1$. Let $E_N$ denote the set of blocking pairs which are not $(2/k)$-blocking. For each $m$, the number of edges satisfying the first inequality in (\ref{eqn:eps-blocking}) is at most $2 \deg(m) / k$, and similarly for the women. Thus 
    \[
    \abs{E_N} \leq \sum_{m \in Y} 2 \deg(m) / k + \sum_{w \in X} 2 \deg(w) / k = 4 \abs{E} / k,
    \]
as desired.
  \end{proof}

    Lemma \ref{lem:good-blocking} shows that no good player is involved in any $(2/k)$-blocking pairs. Combining Lemmas \ref{lem:eps-blocking-good} and \ref{lem:good-blocking}, we can bound the number of blocking pairs incident with good men. All that remains is to bound the number of $(2/k)$-blocking pairs incident with bad men. In the next section, we show that the proportion of bad men is small (at most $\delta n$), and bound the number of $(2/k)$-blocking pairs they contribute. We remark that by Lemma \ref{lem:good-blocking} and the lower bound of Kipnis and Patt-Shamir \cite{KP09}, we cannot hope to have all men be good in $o(\sqrt{n} / \log n)$ rounds.

  \subsection{Bounding blocking pairs from bad players}
  \label{sec:bad}

  In this section, we prove the following bound on the number of blocking pairs contributed by the bad men at the termination of \ASM. Throughout the section, for simplicity of notation, we assume that $\log n$ is an integer.

  \begin{lem}[Bad men guarantee]
    \label{lem:bad-guarantee}
    At the termination of \ASM, for any $\delta \leq \frac 1 2$ the bad men contribute at most $4 \delta \abs{E}$  $(2/k)$-blocking pairs.
  \end{lem}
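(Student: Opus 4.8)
The plan is to control the bad men in two ways: first bound how many of them there can be, and then bound how many $(2/k)$-blocking pairs each one can contribute. The key structural fact comes from the outer loop of \ASM: a man only participates in the $i$th stage (the block of $2\delta^{-1}k$ calls to \QuantileMatch) if $\abs{Q^m} \geq 2^i$ at that point. So if a man is bad at termination, he must have ``survived'' through some stage $i$ — meaning he entered stage $i$ with $\abs{Q^m} \geq 2^i$ but was still not good (neither matched nor rejected by everyone) when \ASM\ stopped. I would first show that in any single stage, among the men who enter that stage active, at most a $\delta$-fraction remain active-and-unmatched afterward; this is where the $2\delta^{-1}k$ repetitions of \QuantileMatch\ are used, presumably via a potential/counting argument showing each call to \QuantileMatch\ either matches a man or permanently shrinks his quantile supply by a full quantile, so after $k$ rejections within a stage he is good. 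Combined with the doubling threshold $\abs{Q^m}\geq 2^i$, the number of men surviving to stage $i$ decays geometrically, and summing over $i$ gives $\abs{B} \leq \delta n$ (using $\delta \le \tfrac12$).

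Next I would bound the blocking contribution of a single bad man. A bad man $m$ who was dropped after stage $i$ has $\abs{Q^m} < 2^{i+1}$ at termination — roughly, once $\abs{Q^m}$ falls below the next threshold he stops participating, so his surviving preference list is short. Every blocking pair incident with $m$ must use a woman still in $Q^m$ (a woman who has rejected $m$, i.e. removed him from her list, is matched above him and cannot block with him, by Lemma~\ref{lem:w-monotonic}), so $m$ contributes at most $\abs{Q^m} < 2^{i+1}$ blocking pairs. I then sum: the bad men removed at stage $i$ number at most (something like) $n/2^{i-1}$ by the geometric decay, each contributing $< 2^{i+1}$ pairs, giving a bound of order $n$ per stage, and the $\delta$ factors from the ``$\delta$-fraction survives'' step combine across the $\log n$ stages to yield the claimed $4\delta\abs{E}$ — here I'd use $\abs{E} \geq$ (sum of surviving degrees)/2 or a direct comparison of $\sum_{m\in B}\abs{Q^m}$ to $\abs{E}$.

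The main obstacle I expect is making the per-stage ``$\delta$-fraction of active men survive'' claim precise: \QuantileMatch\ only guarantees (Lemma~\ref{lem:quantile-match}) that each active man ends with $A^m = \varnothing$, i.e. is matched-in-$A$ or rejected-by-all-of-$A$, but $A^m$ is only his current best nonempty quantile, not his whole list — so a man can be rejected out of one quantile, reset $A^m$ to the next, and stay active. The real content is that each \QuantileMatch\ call that leaves $m$ unmatched costs him (at least) one full quantile permanently, so after at most $k$ such calls he has exhausted all quantiles and is good; hence within $2\delta^{-1}k$ calls the ``budget'' $\abs{Q^m}$ must have dropped below the stage threshold for all but a $\delta$-fraction — I'd need to nail down that counting and its interaction with the doubling thresholds. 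The rest is bookkeeping with geometric series and the standard identity $\sum_v \deg v = 2\abs{E}$.
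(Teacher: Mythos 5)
Your high-level decomposition matches the paper's: bound the fraction of bad men produced per stage of the outer loop, bound the blocking pairs per bad man by $\abs{Q^m} < 2^i$ (your observation that every blocking pair of $m$ must use a $w \in Q^m$ is exactly the paper's Lemma \ref{lem:few-bad-blockers}), and sum. However, two of the steps you sketch do not go through as stated. First, your mechanism for the per-stage ``at most a $\delta$-fraction stay bad'' claim --- that each call to \QuantileMatch\ leaving $m$ unmatched permanently costs him a full quantile, so $k$ such calls make him good --- is false. A man can end a \QuantileMatch\ call matched (losing nothing), and then be dumped mid-way through a \emph{later} call when his partner upgrades; in that later call his set $A$ was never reset (the reset happens only at the start of \QuantileMatch), so he ends it unmatched having lost only the one woman who dumped him, not a quantile. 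If your claim were true, every man would be good after $k$ calls and the $2\delta^{-1}k$ repetitions would be pointless. The paper's actual argument (Lemma \ref{lem:few-bad-men}) is a global amortized count: each man who is bad before a call ``witnesses'' one quantile-rejection event during that call --- either his own best quantile is emptied, or the woman he ends up matched to rejects her entire quantile containing him --- and the total number of such events over a stage is at most $2k\abs{\calA}$, so $b$ persistent bad men over $\ell = 2\delta^{-1}k$ calls forces $b\ell \le 2k\abs{\calA}$, i.e.\ $b \le \delta\abs{\calA}$.

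Second, your final summation loses a $\log n$ factor. Bounding the number of men active at stage $i$ by $\abs{E}/2^i$ (there is no reason for decay like $n/2^{i-1}$; all men could stay active through every stage) and multiplying by the $<2^i$ blocking pairs each contributes gives $O(\delta\abs{E})$ \emph{per stage}, hence $O(\delta \abs{E}\log n)$ overall --- not $4\delta\abs{E}$. The missing idea is the paper's charging scheme: since at every stage good active men outnumber bad men by a factor $\tfrac{1-\delta}{\delta}$, one can greedily choose \emph{pairwise disjoint} sets $H_i$ of good men, each active at stage $i$ (so $\abs{Q^m}\ge 2^i$ for $m \in H_i$), with $\abs{H_i} = \tfrac{1-\delta}{\delta}\abs{B_i}$. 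Then $\sum_i \abs{B_i}2^i \le \tfrac{\delta}{1-\delta}\sum_i \abs{H_i}2^i \le \tfrac{\delta}{1-\delta}\sum_{m\in G}\abs{Q^m} \le \tfrac{\delta}{1-\delta}\abs{E}$, with no $\log n$ loss because the $H_i$ are disjoint. Without this disjointness device (or an equivalent one), the ``direct comparison of $\sum_{m\in B}\abs{Q^m}$ to $\abs{E}$'' you allude to does not close.
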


  The proof of Lemma \ref{lem:bad-guarantee} is in two parts corresponding to guarantees for each of the two nested loops in \ASM. We refer to men $m$ with $\abs{Q^m} \geq 2^i$ as \dft{active in the $i$th iteration} of the outer loop; the remaining men are \dft{inactive} in the $i$th iteration. 

  \begin{lem}[Few bad men]
    \label{lem:few-bad-men}
    When the inner loop in \ASM\ terminates, at most a $\delta$-fraction of active men are bad.
  \end{lem}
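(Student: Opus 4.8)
The plan is to fix an iteration $i$ of the outer loop and track the quantity $\sum_{m \text{ active, bad}} \abs{Q^m}$ across the $2\delta^{-1}k$ inner-loop calls to \QuantileMatch. First I would establish a progress lemma: in a single call to \QuantileMatch, every active man $m$ who is still bad afterwards (i.e.\ still has $A^m\neq\varnothing$ conceptually carried into the next call, but more precisely: not matched and not exhausted) must have $\abs{Q^m}$ strictly decrease, and in fact decrease by at least $\deg(m)/k$ — a full quantile's worth. The reason is exactly Lemma \ref{lem:quantile-match}: at the start of each \QuantileMatch\ call a bad active man $m$ sets $A^m\leftarrow Q^m_i$ where $i$ is his current best nonempty quantile; if he is still bad at the end, he was rejected by every woman in that quantile, so Step 4 of \ProposalRound\ (and the removals in Step 5) strip all of $Q^m_i$ from $Q^m$. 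Hence each call either (a) makes $m$ good, or (b) removes at least one entire quantile, i.e.\ at least $\deg(m)/k \ge 1$ elements, from $Q^m$.

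Next I would turn this per-man statement into a counting bound. Let $B_i$ be the set of men active in iteration $i$ (so $\abs{Q^m}\ge 2^i$ for each such $m$), and let $\Phi = \sum_{m\in B_i} \abs{Q^m}$ at the start of iteration $i$; note $\Phi \le \abs{E}$ since the $Q^m$ are disjointly charged to edges incident to $m$ (each edge $(m,w)$ contributes at most once to $\abs{Q^m}$ and these are over distinct $m$). Actually the cleaner bookkeeping: let $b_j$ be the number of men in $B_i$ who are still bad after the $j$-th inner call. Each such man has had at least $j$ entire quantiles removed over the $j$ calls (by the progress lemma, applied inductively, using that a man bad after call $j$ was bad after every earlier call), so he has lost at least $j\cdot(\deg(m)/k)$ elements. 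But he started with $\abs{Q^m}\le \deg(m)$, so after $k$ calls his preference list would be empty, forcing him to be good (matched or rejected by everyone). Hence after $j=k$ calls there are no bad active men at all — wait, that would be too strong; the subtlety is that $m$ may cease to be *active* (drop below the $\abs{Q^m}\ge 2^i$ threshold) before becoming good, at which point he stops participating. So the correct statement: after at most $k$ calls, every man in $B_i$ is either good or inactive; and I must control the men who went inactive-but-bad.

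This is where the threshold $2^i$ and the budget $2\delta^{-1}k$ calls come in, and it is the main obstacle. A man in $B_i$ that becomes bad-and-inactive during iteration $i$ has dropped from $\abs{Q^m}\ge 2^i$ to $\abs{Q^m}<2^i$, i.e.\ he has shed more than half of his preference list's "active mass." I would argue via the potential $\Phi$: the total mass removed from men in $B_i$ over all of iteration $i$ is at most $\Phi\le\abs{E}$; a bad-and-inactive man accounts for at least $2^{i-1}$ of that removed mass each; and more importantly, summing over *all* outer iterations, a single man becomes bad-and-inactive in at most one iteration (once inactive he never reactivates, since $\abs{Q^m}$ is nonincreasing and the thresholds $2^i$ increase), and when he does he was forced down through the $[2^{i-1},2^i]$ band, contributing $\Omega(2^i)$ removed edges. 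Finally, the men still bad at the *very end* of \ASM\ are exactly those never made good and last active in some iteration $i^*$; I would bound their count in iteration $i$ by a charging/averaging argument: the number of men in $B_i$ still bad after all $2\delta^{-1}k$ calls is at most the number who survived $2\delta^{-1}k$ quantile-removals; since after $k$ removals a man must be good-or-inactive, and "inactive" means $<2^i$ elements remain while he had $\ge 2^i$, the $2\delta^{-1}k$ budget (a factor $2\delta^{-1}$ more than the $k$ needed to empty a list) lets at most a $\delta$-fraction of the active men fail — here I compare $\abs{B_i}$ against $\Phi/2^i$-type quantities and use that each still-bad man blocks a $\ge 2^i/k$-sized chunk of removals per call, so more than $\delta^{-1}k$ such men over $k$ effective rounds would remove more than $\abs{E}$ edges unless fewer than $\delta\abs{B_i}$ remain. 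I would finish by making this averaging precise: $(\text{still-bad active count})\times k \times (2^i/k) \le (\text{mass available}) \le 2^i\cdot\abs{B_i}\cdot(\text{something})$, extract the $\delta$ factor from the $2\delta^{-1}$ overshoot in the loop count, and conclude at most a $\delta$-fraction of active men remain bad.
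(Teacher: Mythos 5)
There is a genuine gap, and it sits exactly at your ``progress lemma'' and the inductive claim that \emph{a man bad after call $j$ was bad after every earlier call}. That claim is false: a man can become matched (hence good) in one call to \QuantileMatch\ and be \emph{dumped} in a later call, when his partner accepts a proposal from a strictly better quantile and, in Step 4 of \ProposalRound, rejects him. Such a man re-enters the bad set having lost only a single entry of $Q^m$ (the woman who dumped him), not an entire quantile, so ``still bad $\Rightarrow$ a full quantile of $Q^m$ was removed'' does not hold call-by-call, and the conclusion that $k$ calls would exhaust every persistently bad man's list collapses. The obstacle you flag instead --- men dropping below the $2^i$ threshold mid-iteration --- is not the operative one: the \textbf{if} test in \ASM\ is evaluated once per outer iteration, so the set of participating men is fixed for the whole inner loop. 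Two further quantitative steps are also shaky: a remaining quantile can be much smaller than $\deg(m)/k$ (or $2^i/k$), because women preemptively remove themselves from quantiles the man has not yet proposed from; and the edge-mass potential $\sum_m \abs{Q^m}\le\abs{E}$ cannot yield a bound proportional to $\abs{\calA}$, since an active man's list may be far larger than $2^i$.

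The paper's proof repairs both problems by counting in units of \emph{quantiles} rather than edges, and by charging every round in which a man is bad to a ``witnessed quantile rejection'' on one side or the other: if a bad man ends \QuantileMatch\ unmatched he was rejected by his entire active quantile (at most $k$ such events per man), and if he ends it matched, the woman who matched him rejected an entire quantile of \emph{hers} (at most $k$ such events per woman, and at most $\abs{\calA}$ women ever acquire new partners during the iteration). This gives a total budget of $2k\abs{\calA}$ witnessed rejections. Combined with the observation that the \emph{number} (not the set) of bad men is non-increasing, $b$ bad men surviving $\ell=2\delta^{-1}k$ calls forces $b\ell\le 2k\abs{\calA}$, i.e.\ $b\le\delta\abs{\calA}$. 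The key ideas your proposal is missing are the woman-side charging (which absorbs the matched-then-dumped men) and the non-increase of the bad count; without them the averaging at the end cannot be made to close.
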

  \begin{proof}
    Let $\calA$ denote the set of active men before executing the inner loop in \ASM. Suppose that after $\ell$ iterations of the inner loop, there are $b$ bad men in $\calA$. We claim that there must have been at least $b$ bad players in every iteration of the inner loop. To see this, first note that by Lemma \ref{lem:w-monotonic}, the number of matched players (and hence matched men) can only increase with each call to \ProposalRound. Second, if a man is rejected by all women on his preference list, he will never become bad. Therefore, the number of good players can only increase with each iteration of the inner loop. Thus there must have been at least $b$ bad men after each of the $\ell$ iterations of the inner loop.


    Suppose $m$ was bad before some call to \QuantileMatch, so that $A^m \neq \varnothing$. By Lemma \ref{lem:quantile-match}, after \QuantileMatch\ $m$ is either matched, or has been rejected by all women $w \in A^m$. In the former case, $p(m)$ rejected all men in her quantile containing $m$. In either case, $m$ witnessed the rejection of a quantile of men---either by precipitating the rejection of $p(m)$'s quantile, or by being rejected by all women in $A$. Notice that the number of women who are matched with new partners during an iteration of the outer loop cannot exceed $\abs{\calA}$, as if $\abs{\calA}$ women did receive new partners, all active men would be matched. Therefore, the women can send at most $k \abs{\calA}$ quantile rejections (after which all active men will be rejected by all women). Similarly, the men can receive at most $k \abs{\calA}$ quantile rejections. Thus, in total the active men can witness at most $2 k \abs{\calA}$ quantile rejections. Therefore, if there are $b$ bad men after $\ell$ calls to \QuantileMatch, we must have $b \ell \leq 2 k \abs{\calA}$. Choosing $\ell = 2 \delta^{-1} k$ gives the desired result.
  \end{proof}

  We say that a man $m$ is \dft{bad in the $i$th iteration} of the outer loop in \ASM\ if $m$ became bad during the $i$th iteration and $\abs{Q^m} < 2^i$. Thus, $m \in B_i$ is bad and will not participate in any further calls to \QuantileMatch, so he will be bad when \ASM\ terminates. We denote the set bad men in the $i$ iteration by $B_i$, so that $B = B_1 \cup B_2 \cup \cdots \cup B_{\log n}$.

  \begin{lem}[Few $(2/k)$-blocking pairs]
    \label{lem:few-bad-blockers}
    Each $m \in B_i$ participates in fewer than $2^i$ $(2/k)$-blocking pairs at the termination of \ASM.
  \end{lem}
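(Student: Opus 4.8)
The plan is to bound the number of $(2/k)$-blocking pairs incident with a man $m \in B_i$ simply by controlling the size of his surviving preference list $\abs{Q^m}$ at the moment he becomes bad. By definition of $B_i$, when $m$ becomes bad we have $\abs{Q^m} < 2^i$. The key observation is that every $(2/k)$-blocking pair $(m,w)$ that $m$ participates in at termination of \ASM\ must have $w \in Q^m$ at that point: indeed, if $w$ had ever been removed from $Q^m$, then (by Step 5 of \ProposalRound) $m$ received REJECT from $w$, which by Lemma \ref{lem:w-monotonic} means $w$ is matched to a man in a quantile at least as good as $m$'s, and this never changes — so $(m,w)$ cannot be $(2/k)$-blocking (the argument here mirrors the proof of Lemma \ref{lem:eps-blocking-good}). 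Hence the number of $(2/k)$-blocking pairs incident with $m$ is at most $\abs{Q^m} < 2^i$.

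First I would make precise that $Q^m$ only shrinks over the course of \ASM\ (stated in Section \ref{sec:notation}: elements are only removed, never added), so that $\abs{Q^m} < 2^i$ at the moment $m$ enters $B_i$ implies $\abs{Q^m} < 2^i$ at termination. Next I would verify the claim that any woman $w$ with $(m,w)$ a $(2/k)$-blocking pair at termination still satisfies $w \in Q^m$. The only way $w$ leaves $Q^m$ is via a rejection in Step 5, and a rejection is sent in Step 4 only by a woman who is matched in $M_0$ to a man in a quantile (of hers) at least as good as $m$'s; by monotonicity (Lemma \ref{lem:w-monotonic}) her final partner is in a quantile at least as good, so $w \succ_w$-ranks her partner ahead of $m$ or ties $m$'s quantile, contradicting the $w$-side inequality in the definition of $(2/k)$-blocking (which requires $P^w(p(w)) - P^w(m) \geq (2/k)\deg w$, i.e. $m$ strictly better than $p(w)$ by a full quantile). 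Combining the two facts gives at most $\abs{Q^m} < 2^i$ blocking pairs, as claimed.

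I do not expect a serious obstacle here; the lemma is essentially bookkeeping once the monotonicity machinery is in place. The one point that needs care is the quantile-comparison in the rejection argument: a woman matched in $M_0$ sends REJECT to every man in a quantile "lesser or equal" to her partner's (Step 4), so after the rejection $m$ is in a strictly worse quantile of $w$ than $w$'s partner — this is exactly the inequality needed to rule out $(2/k)$-blocking on $w$'s side, and it is important that the quantiles have size $\deg(w)/k$ so that being one quantile worse costs at least a $(2/k)$-fraction... wait, actually one quantile worse costs at least a $(1/k)$-fraction; but the definition of $(2/k)$-blocking requires a gap of $2\deg(w)/k \geq$ two quantiles, so a one-quantile gap already suffices to violate it. I would double-check this constant alignment, but it is consistent with the choice $k = \ceil{8\e^{-1}}$ and the $(2/k)$ threshold used throughout Section \ref{sec:good}. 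Everything else is immediate from the definitions and the already-established lemmas.
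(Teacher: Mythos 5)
Your proposal is correct and follows essentially the same route as the paper's proof: show that every $(2/k)$-blocking pair $(m,w)$ at termination must have $w \in Q^m$ (since a rejection means $w$ is matched, by monotonicity permanently, with a man in a quantile at least as good as $m$'s, which caps the rank gap $P^w(p(w)) - P^w(m)$ below $2\deg(w)/k$), and then bound the count by $\abs{Q^m} < 2^i$. The only nitpick is that a rejecting woman's partner may be in the \emph{same} quantile as $m$ rather than a strictly better one, but as your own constant-checking aside effectively notes, a within-quantile gap is still below the $(2/k)$ threshold, so the conclusion is unaffected.
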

  \begin{proof}
    We will show that each bad $m \in B$ participates in at most $\abs{Q^m}$ $(2/k)$-blocking pairs, whence the lemma follows. To this end, notice that if $w \notin Q^m$, then $w$ must have rejected $m$ in some call to \QuantileMatch. Therefore, $w$ must have been matched with some $m'$ that is in the same or better quantile as $m$. By Lemma \ref{lem:w-monotonic}, when \ASM\ terminates, $w$ is still matched with someone in at least as desirable quantile as $m$, implying that $(m, w)$ is not $(2/k)$-blocking. Thus, every $(2/k)$-blocking pair $(m, w)$ must have $w \in Q^m$.
  \end{proof}

  \begin{proof}[Proof of Lemma \ref{lem:bad-guarantee}]
    Let $G_i \subseteq G$ be the set of men which are good at the termination of \ASM\ and active after the $i$th iteration of the outer loop in \ASM. Then we have $G = G_1 \cup G_2 \cup \cdots \cup G_{\log n}$. Since the number of bad men cannot increase after a call to \QuantileMatch.  By Lemma \ref{lem:few-bad-men}, if there were $b$ men which became bad in some iteration of the outer loop of \ASM, there were $\frac{1 - \delta}{\delta} b$ good men still active during the $i$th iteration. Since the number of good men can only increase in subsequent iterations, we have
    \begin{equation}
      \label{eqn:good-v-bad}
      \abs{B_i \cup B_{i+1} \cup \cdots \cup B_{\log n}} \leq b \leq \frac{\delta}{1 - \delta} \abs{G_i \cup G_{i+1} \cup \cdots \cup G_{\log n}}.
    \end{equation} 
    Applying (\ref{eqn:good-v-bad}), we can greedily form disjoint sets
    \[
    H_{\log n} \subseteq G_{\log n},\ H_{\log n-1} \subseteq G_{\log_n - 1} \cup G_{\log n}, \ldots, H_1 \subseteq G
    \]
    such that for all $i$, $H_i$ is active in the $i$th iteration and $\abs{H_i} = \frac{1 - \delta}{\delta} \abs{B_i}$. Then we compute
    \begin{align*}
      \sum_{m \in B} \abs{Q^m} &= \sum_{i = 1}^{\log n} \sum_{m \in B_i} \abs{Q^m} \leq \sum_{i = 1}^{\log n} \abs{B_i} 2^{i} \leq \sum_{i = 1}^{\log n} \frac{2 \delta}{1 - \delta} \abs{H_i} 2^{i} \leq \frac{2 \delta}{1 - \delta} \sum_{m \in G} \abs{Q^m} \leq \frac{2 \delta}{1 - \delta} \abs{E}.
    \end{align*}
    The first inequality holds by Lemma \ref{lem:few-bad-blockers}, while the second holds by the choice of the $H_i$ and the definition of the $G_i$.
  \end{proof}

  \subsection{Approximation guarantee}

  \begin{proof}[Proof of Theorem \ref{thm:approximation-guarantee}]
    By Lemma \ref{lem:good-blocking}, there are at most $4 \abs{E} / k$ blocking pairs which are not $(2/k)$-blocking. By Lemma \ref{lem:eps-blocking-good}, all $(2/k)$-blocking pairs are incident with $B$. Finally, by Lemma \ref{lem:bad-guarantee}, the bad men contribute at most $4 \delta \abs{E}$ blocking pairs for $\delta \leq 1/2$. Therefore, the total number of blocking pairs is at most $4(\delta + 1/k) \abs{E}$. Choosing $\delta = \e / 8$ and $k = \ceil{8 / \e}$ gives the desired result.
  \end{proof}

  \subsection{Run-time Guarantee}
  \label{sec:runtime}

  \begin{thm}
    \label{thm:runtime}
    \ASM$(P, \e, n)$ runs in $O(\e^{-3} \log^5(n))$ communication rounds.
  \end{thm}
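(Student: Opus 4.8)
The plan is to bound the running time by multiplying together the depths of the three nested loops involved in \ASM\ and the cost of the innermost subroutine, \ProposalRound, then substitute the chosen values of $k$ and $\delta$.

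First I would check that a single call to \ProposalRound\ runs in $O(\log^4 n)$ rounds. Steps 1, 2, 4, and 5 each consist of one round of communication followed by purely local computation, which the CONGEST model does not charge for: in each such step every processor sends to some subset of its neighbors a constant-size token (PROPOSE, ACCEPT, or REJECT) together with $O(\log n)$-bit identifiers, which is within the model's message-size budget. Step 3 invokes \MaximalMatching\ on the bipartite graph $\commgraph_0$, which has at most $2n$ vertices; by Theorem \ref{thm:max-matching} this costs $\log^4(2n) = O(\log^4 n)$ rounds. Hence \ProposalRound\ runs in $O(\log^4 n)$ rounds.

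Next I would count invocations. The loop inside \QuantileMatch\ has length $k$ and calls \ProposalRound\ once per iteration (the initialization preceding the loop is local), so \QuantileMatch\ runs in $O(k \log^4 n)$ rounds. In \ASM, the inner loop calls \QuantileMatch\ exactly $2\delta^{-1} k$ times in each iteration of the outer loop, and the outer loop has $\log n + 1$ iterations (the guard $\abs{Q} \geq 2^i$ only ever skips work). Multiplying these contributions gives a round count of
\[
O(\log n) \cdot O(\delta^{-1} k) \cdot O(k) \cdot O(\log^4 n) = O\!\left(\delta^{-1} k^2 \log^5 n\right).
\]
Finally, substituting $k = \ceil{8 \e^{-1}} = \Theta(\e^{-1})$ and $\delta = \e/8 = \Theta(\e)$ yields $\delta^{-1} k^2 = \Theta(\e^{-3})$, so \ASM\ runs in $O(\e^{-3} \log^5 n)$ communication rounds.

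I do not anticipate a genuine obstacle in this argument; the only points requiring a moment's care are verifying that each of Steps 1, 2, 4, and 5 of \ProposalRound\ respects the $O(\log n)$-bit message bound so that each is a single round, and noting that the vertex count passed to \MaximalMatching\ is $O(n)$ so that Theorem \ref{thm:max-matching} applies with the claimed $\log^4 n$ bound.
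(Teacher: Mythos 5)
Your proof is correct and follows essentially the same route as the paper: the paper likewise observes that all communication happens inside \ProposalRound, counts $O(\e^{-2}\log n)$ calls to \QuantileMatch\ times $O(\e^{-1})$ calls to \ProposalRound\ each, and charges $O(\log^4 n)$ per call for the \MaximalMatching\ step of \cite{HKP01}. Your version just spells out the loop-depth bookkeeping and the CONGEST message-size check a bit more explicitly.
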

  \begin{proof}
    Notice that the only communication between processors occurs in \ProposalRound. \ASM$(P, \e, n)$ iterates \QuantileMatch$(P, k)$ a total of $O(\e^{-2} \log n)$ times, while quantile match invokes \ProposalRound$(Q,k,A)$ $O(\e^{-1})$ times. Finally, each step of \ProposalRound\ can be performed in $O(1)$ communication rounds, except for Step 3, which calls \MaximalMatching. By \cite{HKP01}, \MaximalMatching\ runs in $O(\log^4 n)$ communication rounds. Thus, \ASM$(P, \e, n)$ requires $O(\e^{-3} \log^5(n))$ communication rounds, as claimed.
  \end{proof}

  \begin{rem}
    While the CONGEST model allows for unbounded local computation in each round, the local computations required by \ASM\ are quite simple. In fact, each communication round can easily be implemented in nearly-linear time in $n$. Thus the synchronous run-time of \ASM\ is $\tilde{O}(n)$. To our knowledge, this gives the first distributed algorithm whose synchronous run-time is sub-quadratic in $n$, even for unbounded preferences.
  \end{rem}

  \section{Randomized Algorithms}
  \label{sec:randomized}

  The main source of complexity in \ASM\ comes from finding a maximal matching. While Han\'n\'ckowiak, Karo\'nski, and Panconesi's algorithm \cite{HKP01} is the most efficient known deterministic algorithm, faster randomized algorithms are known. Specifically, we consider the algorithm of Israeli and Itai \cite{II86}. They describe a simple randomized distributed algorithm which finds a maximal matching in expected $O(\log n)$ rounds. By simply replacing \MaximalMatching\ with a truncated version Israeli and Itai's algorithm, we obtain a faster randomized algorithm for finding almost stable matchings. We refer the reader to Appendix \ref{sec:amm-proof} for details on the guarantees for Israeli and Itai's algorithm.

  \subsection{General preferences}

  \begin{thm}
    \label{thm:rand-asm}
    There exists a randomized distributed algorithm, \RandASM$(P, \e, n, \delta)$, which for any $\delta, \e > 0$ finds a $(1 - \e)$-stable matching with probability at least $1 - \delta$ in $O(\e^{-3} \log^2 (n / \delta \e^3))$ rounds.
  \end{thm}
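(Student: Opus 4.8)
The plan is to run \ASM\ essentially unchanged, but to replace each call to the deterministic subroutine \MaximalMatching\ (Theorem~\ref{thm:max-matching}) by a truncated run of the randomized maximal matching algorithm \AMM\ obtained from Israeli and Itai \cite{II86} in Appendix~\ref{sec:amm-proof}. The conceptual point I would lean on is that the entire approximation analysis of Section~\ref{sec:performance}---and in particular the proof of Theorem~\ref{thm:approximation-guarantee}---uses only that Step~3 of \ProposalRound\ returns \emph{some} maximal matching of $\commgraph_0$; it never inspects which maximal matching is produced or how. Hence, conditioned on the event that \emph{every} invocation of \AMM\ during a run of \RandASM\ returns a genuinely maximal matching, the execution is a valid execution of the algorithmic template analyzed in Section~\ref{sec:performance}, so Theorem~\ref{thm:approximation-guarantee} applies verbatim and the output is $(1-\e)$-stable with certainty. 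It therefore suffices to (i) bound the number of matching subroutine calls, (ii) choose the truncation length so that a union bound over these calls gives overall failure probability at most $\delta$, and (iii) add up the rounds.

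For (i), I would repeat the count from the proof of Theorem~\ref{thm:runtime}: \ASM\ calls \QuantileMatch\ a total of $O(\e^{-2}\log n)$ times (a factor $\log n$ from the outer loop, and $2\delta^{-1}k = O(\e^{-2})$ from the inner loop, recalling $\delta,k = \Theta(\e^{-1})$ there), each \QuantileMatch\ calls \ProposalRound\ exactly $k = O(\e^{-1})$ times, and each \ProposalRound\ runs the matching subroutine once; so there are $N = O(\e^{-3}\log n)$ invocations. For (ii), I would invoke the guarantee for \AMM\ proved in Appendix~\ref{sec:amm-proof}: truncating Israeli and Itai's procedure after $O(\log(n/p))$ rounds returns a maximal matching except with probability at most $p$. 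Setting $p = \delta/N$ and taking a union bound over the $N$ invocations (no independence is needed) shows that with probability at least $1 - \delta$ all of them succeed, and on that event the output is $(1-\e)$-stable by the argument above.

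For (iii), each \AMM\ call runs in
\[
O\paren{\log(n/p)} = O\paren{\log\paren{n\e^{-3}\log n/\delta}} = O\paren{\log(n/\delta\e^3)},
\]
using $\log\log n = O(\log n) = O(\log(n/\delta\e^3))$, while every other step of \ProposalRound, \QuantileMatch, and \ASM\ costs $O(1)$ rounds. Hence the total is
\[
N\cdot O\paren{\log(n/\delta\e^3)} = O\paren{\e^{-3}\log n\cdot\log(n/\delta\e^3)} = O\paren{\e^{-3}\log^2(n/\delta\e^3)},
\]
since $\log n \le \log(n/\delta\e^3)$ for $\delta,\e \le 1$, which is the claimed bound.

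The only step requiring real care is the black-box guarantee for \AMM---namely that Israeli and Itai's algorithm, implemented in the CONGEST model and halted after $O(\log(n/p))$ rounds, fails to be maximal with probability at most $p$---and this is exactly what is deferred to Appendix~\ref{sec:amm-proof}. Everything else is bookkeeping: once the maximality guarantee holds per-call with the right failure probability, the union bound and the insensitivity of \ASM's analysis to the choice of maximal matching do the rest.
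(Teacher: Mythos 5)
Your proposal is correct and follows essentially the same route as the paper: swap in a truncated run of Israeli and Itai's \MatchingRound\ for \MaximalMatching, count the $O(\e^{-3}\log n)$ invocations, apply Corollary~\ref{cor:matching-probability} with per-call failure probability $\delta/N$, union bound, and note that the approximation analysis of Section~\ref{sec:performance} is insensitive to which maximal matching is returned. The only cosmetic quibble is that you label the truncated maximal-matching subroutine \AMM, which the paper reserves for the \emph{almost}-maximal variant of Corollary~\ref{cor:amm}; the guarantee you actually invoke is that of Corollary~\ref{cor:matching-probability}, and your use of it is correct.
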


  \begin{proof}[Proof sketch]
    We take \RandASM\ to be exactly the same as \ASM, except that we use Israeli and Itai's algorithm \cite{II86} for the \MaximalMatching\ subroutine. Specifically, for \MaximalMatching, we iterate \MatchingRound\ (see Appendix \ref{sec:amm-proof}) $O(\log(n / \delta \e^3))$ times. By Corollary \ref{cor:matching-probability}, each call to \MaximalMatching\ will succeed  in finding a maximal matching with probability at least $1 - O(\delta \e^3 / \log n)$. Since \RandASM\ calls \MaximalMatching\ $O(\e^{-3} / \log n)$ times, by the union bound, every call to \MaximalMatching\ succeeds with probability at least $1 - \delta$. The remaining analysis of \RandASM\ is identical to that of \ASM.
  \end{proof}

  \subsection{Almost-regular preferences}

  For $\alpha \geq 1$, We call preferences $\calP$ \dft{$\alpha$-almost-regular} if $\max_{m \in Y} \deg m \leq \alpha \min_{m \in Y} \deg m$. For example, complete preferences (where all men rank all women) are $1$-almost-regular, while uniformly bounded preferences are $\alpha$-almost-regular for $\alpha = \max_{m \in Y} \deg m$. From an algorithmic standpoint, $\alpha$-almost-regular preferences are advantageous because in order to bound the proportion of blocking edges from bad men, it suffices only to bound the number of bad men. By Lemma \ref{lem:few-bad-men}, to obtain such a guarantee, one need only iterate \QuantileMatch\ $O(1)$ rounds (instead of $O(\log n)$ times as required by \ASM).

  Further, for $\alpha$-almost-regular preferences, we can relax our requirement that \MaximalMatching\ actually find a maximal matching. We say that a player $v$ is \dft{unmatched} in $G_0$ if $v$ does not satisfy property 1 or 2 in Definition \ref{dfn:maximal-matching}. We call a subroutine \AMM$(\eta, \delta)$ which finds a matching in which only an $\eta$-fraction of players are left unmatched with probability at least $1 - \delta$ (see Appendix \ref{sec:amm-proof} for details). These unmatched players are immediately removed from play. With these simplifications, we obtain the following result.

  \begin{thm}
    \label{thm:almost-regular-asm}
    There exists a randomized distributed algorithm \AlmostRegularASM$(P, \e, \delta, \alpha)$ which for $\alpha$-almost-regular preferences $P$ finds a $(1 - \e)$-stable matching with probability at least $1 - \delta$. The run-time of \AlmostRegularASM$(P, \e, \delta, \alpha)$ is $O(\alpha \e^3 \log(\alpha/\delta \e))$ rounds.
  \end{thm}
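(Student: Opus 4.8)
The plan is to reuse the analysis of \ASM\ almost verbatim, exploiting $\alpha$-almost-regularity at exactly two places. First, since $\abs{Q^m}\le\deg m\le\alpha\min_{m'\in Y}\deg m'\le\alpha\abs{E}/n$ for \emph{every} man $m$, any bound of the form $\abs{B}\le\delta' n$ on the number of bad men immediately yields $\sum_{m\in B}\abs{Q^m}\le\delta'\alpha\abs{E}$; in particular the outer ``doubling'' loop of \ASM, whose only role is to keep $\abs{Q^m}$ under control for bad men, becomes superfluous. So I would take \AlmostRegularASM\ to quantize preferences into $k=\ceil{8\e^{-1}}$ quantiles and then simply call \QuantileMatch$(Q,k)$ a fixed number $\ell=O(\alpha\e^{-2})$ of times, with no outer loop. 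Second, inside \ProposalRound\ I would replace the call to \MaximalMatching\ by \AMM$(\eta,\delta'')$ for $\eta,\delta''$ chosen below, immediately deleting from play the partnerless players left unmatched in $G_0$.

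The first thing to verify is that the structural lemmas of Section~\ref{sec:performance} survive replacing \MaximalMatching\ by \AMM. The key point is that although $M_0$ need not be maximal in $G_0$, once the partnerless players that are unmatched in $G_0$ have been deleted, $M_0$ \emph{is} maximal in the graph induced on the survivors: if a surviving vertex $v$ has no partner in $M_0$, then every surviving $G_0$-neighbour of $v$ must be matched in $M_0$, for otherwise $v$ itself would have been deleted. Hence Lemmas~\ref{lem:w-monotonic}, \ref{lem:quantile-match}, \ref{lem:eps-blocking-good}, and \ref{lem:good-blocking} continue to hold for the players that are never deleted; in particular the blocking pairs that are not $(2/k)$-blocking still number at most $4\abs{E}/k\le\e\abs{E}/2$. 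Moreover the argument proving Lemma~\ref{lem:eps-blocking-good} in fact shows that if $m$ is good, then every woman with whom $m$ could form a $(2/k)$-blocking pair is matched at termination (with a partner at least as good for her as $m$), and hence is never deleted; therefore \emph{every} $(2/k)$-blocking pair, including those involving a deleted woman, is incident with a man who is bad or deleted.

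Next I would bound the number of ``problematic'' men, those that are bad or deleted at termination. Running the counting argument of Lemma~\ref{lem:few-bad-men} with $\calA=Y$ (so $\abs{\calA}\le n$, and noting that deletions only remove active men and so cannot increase the number of quantile rejections) shows that after $\ell=2(\delta')^{-1}k$ calls to \QuantileMatch\ at most $\delta' n$ of the never-deleted men are bad; taking $\delta'=\e/(8\alpha)$ gives $\ell=O(\alpha\e^{-2})$. Since \AMM\ is invoked $N=\ell k=O(\alpha\e^{-3})$ times and each invocation deletes at most $2\eta n$ of the (at most $2n$) surviving players, at most $2\eta nN$ men are ever deleted. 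Each problematic man is incident with at most $\deg m\le\alpha\abs{E}/n$ blocking pairs, so the total number of $(2/k)$-blocking pairs is at most $(\delta'+2\eta N)\alpha\abs{E}$; choosing $\eta=\e/(16N\alpha)=O(\e^{4}/\alpha^{2})$ makes this at most $\e\abs{E}/4$, and combined with the non-$(2/k)$-blocking pairs the output is $(1-\e)$-stable.

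Finally, the failure probability and the running time. Each of the $N$ calls to \AMM$(\eta,\delta'')$ succeeds with probability at least $1-\delta''$ and, by the guarantees established in Appendix~\ref{sec:amm-proof}, runs in $O(\log(1/\eta)+\log(1/\delta''))$ communication rounds. Taking $\delta''=\delta/N$ and a union bound over the $N$ calls, with probability at least $1-\delta$ every call succeeds, after which the entire analysis above is deterministic. With $\eta=O(\e^{4}/\alpha^{2})$ and $\delta''=O(\delta\e^{3}/\alpha)$ each \AMM\ call takes $O(\log(\alpha/(\delta\e)))$ rounds, so the total is $N\cdot O(\log(\alpha/(\delta\e)))=O(\alpha\e^{-3}\log(\alpha/(\delta\e)))$ rounds, which is the claimed bound. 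I expect the main obstacle to be precisely the bookkeeping of the two middle paragraphs, namely confirming that the deleted players (women as well as men) cannot, once every \AMM\ call has succeeded, account for more than an $O(\e)$-fraction of the blocking pairs, and that the ``few bad men'' count of Lemma~\ref{lem:few-bad-men} is unaffected by the deletions, rather than the probabilistic estimate (a routine union bound) or the running-time arithmetic.
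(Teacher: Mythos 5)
Your construction is the same as the paper's: drop the outer doubling loop, iterate \QuantileMatch\ $O(\alpha\e^{-2})$ times so that Lemma~\ref{lem:few-bad-men} leaves at most an $\e/O(\alpha)$ fraction of bad men, replace \MaximalMatching\ by \AMM\ with $\eta$ and the per-call failure probability tuned against the $O(\alpha\e^{-3})$ invocations, remove players left unmatched in $G_0$, and charge all $(2/k)$-blocking pairs to bad-or-deleted men, each of whom has degree at most $\alpha\abs{E}/n$ by almost-regularity. Your bookkeeping is in places more careful than the paper's proof sketch (for instance, your $\eta=O(\e^4/\alpha^2)$ is what the total deletion budget over all $N=O(\alpha\e^{-3})$ calls actually requires, versus the $O(\e^4/\alpha)$ stated in the paper; the discrepancy is absorbed by the logarithm in the round bound).

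The one step that does not close as written is your claim that every $(2/k)$-blocking pair involving a \emph{deleted woman} is incident with a bad or deleted man. The argument of Lemma~\ref{lem:eps-blocking-good} rules out such a pair for a good man $m$ only because $m$ must have \emph{proposed to and been rejected by} $w$, which forces $w$ to be matched. If $w$ is deleted (and removed from $Q^m$) before $m$'s active set reaches her quantile, $m$ never proposes to her, can still end up good and undeleted, and the pair $(m,w)$ is then charged to nobody. This matters because $\alpha$-almost-regularity constrains only the men's degrees: a single deleted woman can have degree comparable to $\abs{E}$ even when every man has constant degree, so these pairs cannot be absorbed into your $O(\eta n N)$ count of deleted players. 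The clean repair---which is what the paper's sketch appears to intend when it says ``such unmatched \emph{men} are immediately removed from play''---is to delete only the men left unmatched in $G_0$ and keep all women in play; Lemma~\ref{lem:quantile-match} survives because a woman's unmatched accepted suitors are precisely the deleted ones and will not re-propose, and with no deleted women your charging argument is complete. (To be fair, the paper's own proof is only a sketch and relegates this entire issue to a footnote.)
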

  \begin{proof}[Proof sketch]
    \AlmostRegularASM$(P, \e, \delta, \alpha)$ works by iterating \QuantileMatch\ $O(\alpha \e^{-2})$ times, which by Lemma \ref{lem:few-bad-men} implies that only $\e / 4 \alpha$ fraction of men are bad. 

    We modify \ProposalRound\ to call \AMM$(\eta, \delta')$ instead of \MaximalMatching. \AMM\ runs in $O(\log((\eta \delta')^{-1}))$ and finds a $(1 - \eta)$-maximal matching with probability $(1 - \delta')$. Since \AMM\ is called $O(\alpha \e^{-3})$ times, choosing $\eta = O(\e^4 / \alpha)$ and $\delta' = O(\delta \e^3 / \alpha$), \AMM\ will leave at most an $\e / 4 \alpha$ fraction of men unmatched in any call \AMM\ with probability at least $1 - \delta$, by the union bound. Such unmatched men are immediately removed from play.

    By the preceding two paragraphs, \AlmostRegularASM\ produces a matching in which at most an $\e / 2 \alpha$ fraction of men are either bad or unmatched. By $\alpha$-almost-regularity, these men can contribute at most $\frac{\e}{2} \abs{E}$ blocking pairs. The remaining men are good, and therefore by Lemmas \ref{lem:eps-blocking-good} and \ref{lem:good-blocking}\footnote{Although these lemmas were proven assuming that \MaximalMatching\ found a maximal matching (not an almost maximal matching) the proofs remain valid as long as the small fraction of unmatched players immediately remove themselves from play.} contribute at most $\frac{\e}{2} \abs{E}$ blocking pairs.
  \end{proof}



  \bibliography{asm-alg}{}
  \bibliographystyle{plain}

  \appendix


  \section{Randomized maximal and almost maximal matchings}
  \label{sec:amm-proof}

  Israeli and Itai's \cite{II86} algorithm for finding a maximal matching works by identifying a sparse subgraph of $\commgraph$, then finding a large matching $\matching_1$ in the sparse subgraph. The edges and incident vertices of $\matching_1$, as well as remaining isolated vertices, are removed from $\commgraph$ resulting in a subgraph $\commgraph_1$. The process is iterated, giving a sequence of subgraphs $\commgraph_1, \commgraph_2, \ldots$ and matchings $\matching_1, \matching_2, \ldots$, until $\commgraph_k = \varnothing$. At this point, $\matching = \bigcup_{i = 1}^k \matching_i$ is a maximal matching. We give pseudocode for Israeli and Itai's main subroutine, which we call \MatchingRound, in Algorithm \ref{alg:matching-round}. In \cite{II86}, Israeli and Itai prove the following performance guarantee for \MatchingRound.

  \begin{algorithm}
    \caption{\MatchingRound$(\commgraph)$: Finds a large matching in a graph}
    \label{alg:matching-round}
    \begin{algorithmic}[1]
      \STATE Each $v \in V$ picks a neighbor $w$ uniformly at random, forms oriented edge $(v,w)$.
      \STATE Each $v \in V$ with $\deg_{in}(v) > 0$ picks one in-coming edge $(w, v)$ uniformly at random, deletes remaining in-edges. Let $G'$ be the (undirected) graph formed by the chosen edges with orientation ignored.
      \STATE Each $v \in V$ with $\deg_{\commgraph'}(v) > 0$ chooses one incident edge $(v, w)$ uniformly at random.
      \STATE The matching $M_1$ consists of edges $(v, w) \in \commgraph'$ which were chosen by both $v$ and $w$ in the previous round. $\commgraph_1 = (V_1, E_1)$ is the induced subgraph of $\commgraph$ formed by removing all vertices contained in $M_1$ and any remaining isolated vertices from $\commgraph$.
      \STATE Output $(G_1, M_1)$.
    \end{algorithmic}
  \end{algorithm}

  \begin{lem}(Israeli and Itai \cite{II86})
    \label{lem:matching}
    There exists an absolute constant $c < 1$ such that on input $\commgraph = \commgraph_0 = (V_0, E_0)$, the resulting graph $\commgraph_1 = (V_1, E_1)$ found by \MatchingRound\ satisfies $\E(\abs{V_1}) \leq c \abs{V_0}$.
  \end{lem}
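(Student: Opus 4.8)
The plan is to follow Israeli and Itai's \cite{II86} argument: show that one call to \MatchingRound\ \emph{eliminates}, in expectation, a constant fraction of the vertices of $G_0$, where a vertex is eliminated if it is either matched into $M_1$ or left isolated (hence deleted) in $G_1$. Every non-eliminated vertex remains in $V_1$, so such a bound gives $\E(\abs{V_1}) \le c \abs{V_0}$ directly. First I would record the structure created by Steps 1--2. We may assume $G_0$ has no isolated vertices. After Step 1 every vertex has out-degree exactly $1$ in the oriented graph $H$, so $H$ is a functional digraph with $\abs{E(H)} = \abs{V_0}$; write $I$ for the set of vertices with positive in-degree in $H$. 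In Step 2 each $v \in I$ retains exactly one incoming edge, and these edges have distinct heads, so $\abs{E(G')} = \abs{I}$. Moreover in $G'$ each vertex keeps at most one in-edge and has at most its single Step-1 out-edge surviving, so $\Delta(G') \le 2$; thus $G'$ is a vertex-disjoint union of simple paths and cycles.

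The easy half is a lower bound on $\abs{M_1}$. Fix $e=(v,w)\in E(G')$. Since distinct vertices make their Step-3 choices independently and $\deg_{G'}(v),\deg_{G'}(w)\le 2$,
\[
\Pr[e\in M_1] \;=\; \Pr[v\text{ picks }e]\cdot\Pr[w\text{ picks }e]\;\ge\;\tfrac12\cdot\tfrac12\;=\;\tfrac14 .
\]
By linearity $\E(\abs{M_1})\ge \tfrac14\abs{E(G')} = \tfrac14\abs{I}$, so the expected number of matched (hence eliminated) vertices is at least $\tfrac12\abs{I}$. If $\abs{I}\ge\tfrac14\abs{V_0}$ this already yields the lemma, so the remaining task is the regime in which Step 1 collapses many vertices onto few common targets, i.e.\ $\abs{I}$ is small.

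The crux is to produce enough eliminations in that regime, and this is the step I expect to be the main obstacle. The key point is that a vertex $w$ of large in-degree in $H$ keeps, in Step 2, an in-edge from a vertex $u$ which with high probability is an endpoint of the path of $G'$ containing it, so $u$ offers its unique edge to $w$ in Step 3 with certainty and $w$ is matched with probability at least $\tfrac12$. One then charges the vertices left isolated in $G'$ — those outside $I$ whose Step-1 out-edge was discarded — and, more generally, the ``collision surplus'' $\abs{V_0}-\abs{I}$, against these matched high-in-degree hubs: when $\abs{I}$ is small the hubs carry a constant fraction of the $\abs{V_0}$ edges of $H$, and matching a hub forces the deletion of enough vertices around it. Making this charging precise — in particular controlling the correlations among the ``$w$ is matched'' events at different hubs, and the fact that a vertex is deleted only once \emph{all} of its surviving neighbours are matched — is the combinatorial heart of the Israeli--Itai argument; with that estimate combined with the $\tfrac12\abs{I}$ bound and summed over all vertices, one obtains $\E\big[\#\{\text{eliminated vertices}\}\big]\ge(1-c)\abs{V_0}$, hence $\E(\abs{V_1})\le c\abs{V_0}$ for an absolute constant $c<1$.
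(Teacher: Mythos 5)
You should first note that the paper itself offers no proof of this lemma---it is quoted from Israeli and Itai \cite{II86}---so your attempt can only be judged against the result being invoked, and there the news is bad: the vertex-counting statement you are trying to prove is false, and it fails precisely in the ``hard case'' you flag as the main obstacle. The problem is that a vertex survives into $\commgraph_1$ unless it is matched \emph{or every one of its neighbours in $\commgraph_0$ is matched}; so matching a hub does not ``force the deletion of enough vertices around it,'' because each surrounding vertex may have many other hubs as neighbours, and a single surviving hub keeps all of them alive. Concretely, take $\commgraph_0 = K_{k,\,n-k}$ with $k = n^{3/4}$ (hubs on the small side). Here $\abs{I} = O(k) = o(n)$, so you are squarely in your hard regime; $\commgraph'$ has $O(k)$ edges, so only $O(k)$ vertices get matched; and a leaf is deleted only if \emph{all} $k$ hubs are matched. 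A short computation shows each hub is left unmatched with probability $\Theta(k/n)$, so the expected number of unmatched hubs is $\Theta(k^2/n) = \Theta(\sqrt{n}) \to \infty$, and by near-independence of these events some hub survives with probability $1-o(1)$. On that event no leaf is isolated, so $\abs{V_1} \geq n - O(k)$ and $\E(\abs{V_1}) = (1-o(1))\abs{V_0}$, contradicting the lemma. No charging scheme can rescue the sketch, because the quantity being bounded does not in fact shrink by a constant factor.

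What Israeli and Itai actually prove is the edge version: $\E(\abs{E_1}) \leq c\abs{E_0}$ for an absolute constant $c<1$ (in the example above $\E(\abs{E_1}) = O(k^2) \ll k(n-k) = \abs{E_0}$, consistent with this). Your first half is a correct and useful start toward that statement---the observations that $\commgraph'$ is a disjoint union of paths and cycles and that each edge of $\commgraph'$ enters $\matching_1$ with probability at least $1/4$ are exactly the right ingredients---modulo the small slip that two mutually retained proposals can collapse to one undirected edge, so one only gets $\abs{E(\commgraph')} \geq \abs{I}/2$ rather than equality. But the quantity to relate these matched vertices to is the number of \emph{edges} of $\commgraph_0$ they cover, not the number of vertices eliminated. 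The edge version still yields the paper's Corollary \ref{cor:matching-probability} (replace $c^s n$ by $c^s n^2$ in the Markov bound), so the right fix is to restate the lemma in terms of $\abs{E_1}$ and prove that instead.
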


  As a consequence of Lemma \ref{lem:matching}, we obtain the following useful result.

  \begin{cor}
    \label{cor:matching-probability}
    Let $\eta > 0$ be a parameter. Then $s = O(\log(n / \eta))$ iterations of \MatchingRound\ suffice to produce a maximal matching in $\commgraph$ with probability at least $1 - \eta$.
  \end{cor}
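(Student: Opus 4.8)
The plan is to bound the expected number of vertices remaining after $s$ rounds of \MatchingRound\ using Lemma \ref{lem:matching}, and then convert this expectation bound into a high-probability statement via Markov's inequality. First I would set $\commgraph = \commgraph_0 = (V_0, E_0)$ with $\abs{V_0} \leq n$, and let $\commgraph_1, \commgraph_2, \ldots$ be the sequence of graphs obtained by iterating \MatchingRound, writing $V_i$ for the vertex set of $\commgraph_i$. Lemma \ref{lem:matching} gives $\E(\abs{V_{i+1}} \mid \commgraph_i) \leq c \abs{V_i}$ for the absolute constant $c < 1$, so by the tower property $\E(\abs{V_s}) \leq c^s \abs{V_0} \leq c^s n$. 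Choosing $s = \lceil \log_{1/c}(n/\eta) \rceil = O(\log(n/\eta))$ makes $c^s n \leq \eta$, hence $\E(\abs{V_s}) \leq \eta$.

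Next I would observe that $\abs{V_s}$ is a non-negative integer-valued random variable, so Markov's inequality gives $\Pr(\abs{V_s} \geq 1) \leq \E(\abs{V_s}) \leq \eta$. Equivalently, with probability at least $1 - \eta$ we have $\abs{V_s} = 0$, i.e.\ $\commgraph_s = \varnothing$. Finally I would invoke the correctness of the overall construction already recalled in the text: once $\commgraph_s = \varnothing$, the union $\matching = \bigcup_{i=1}^{s} \matching_i$ of the matchings produced in each round is a maximal matching of $\commgraph$, since every vertex has either been matched or removed as isolated (and a vertex is removed as isolated only when all its neighbors have been matched). Thus $s = O(\log(n/\eta))$ iterations of \MatchingRound\ produce a maximal matching with probability at least $1 - \eta$, as claimed.

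There is essentially no hard step here; the only mild subtlety is making sure the expectation bound is applied conditionally round-by-round (so that the constant $c$ from Lemma \ref{lem:matching} compounds geometrically regardless of the intermediate graph structure) rather than trying to control the whole process in one shot. Everything else is the standard "geometric shrinkage plus Markov" argument, and the bookkeeping constant inside the $O(\cdot)$ is just $1/\log(1/c)$.
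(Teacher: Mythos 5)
Your proof is correct and follows essentially the same route as the paper: bound $\E(\abs{V_s}) \leq c^s n$ via Lemma \ref{lem:matching}, apply Markov's inequality to $\Pr(\abs{V_s} \geq 1)$, and choose $s = O(\log(n/\eta))$. The extra care you take with the round-by-round conditional expectation and the final remark that $\commgraph_s = \varnothing$ implies maximality of $\bigcup_i \matching_i$ are details the paper leaves implicit, but the argument is the same.
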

  \begin{proof}
    By Lemma \ref{lem:matching}, we have $\E(\abs{V_s}) \leq c^s n$. Therefore, applying Markov's inequality gives
    \[
    \Pr(\abs{V_s} \geq 1) \leq \frac{\E(\abs{V_s})}{1} \leq c^s n.
    \]
    The result follows by taking $s \geq \log(n / \eta) / \log(c^{-1})$.
  \end{proof}

  The almost regular variant of \ASM\ only requires a subroutine that we finds matchings which are almost maximal.

  \begin{dfn}
    \label{dfn:amm}
    Let $\commgraph = (V, E)$ be a communication graph and $M \subset E$ a matching in $G$. For $0 < \eta \leq 1$, we say that $M$ is \dft{$(1 - \eta)$-maximal} if the set $V'$ of vertices not satisfying conditions 1 or 2 in Definition \ref{dfn:maximal-matching} satisfies $\abs{V'} \leq \eta \abs{V}$.
  \end{dfn}

 We can apply Lemma \ref{lem:matching} to give a constant round algorithm which finds almost maximal matchings.

  \begin{cor}
    \label{cor:amm}
    There exists a randomized distributed algorithm \AMM$(\commgraph, \eta, \delta)$ which finds a $(1 - \eta)$-maximal matching with probability at least $(1 - \delta)$. \AMM$(\commgraph, \eta, \delta)$ runs in $O(\log(\eta^{-1} \delta^{-1}))$ rounds.
  \end{cor}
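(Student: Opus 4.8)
The plan is to take \AMM$(\commgraph,\eta,\delta)$ to be the algorithm that simply iterates \MatchingRound\ a fixed number $s$ of times, accumulating $\matching = \bigcup_{i=1}^{s}\matching_i$, where $s$ will be chosen as a function of $\eta$ and $\delta$. Each invocation of \MatchingRound\ consists of a constant number of communication rounds, so the round complexity is $O(s)$; it then remains to pick $s$ and verify the approximate-maximality guarantee. This is the same template as Corollary \ref{cor:matching-probability}, with ``all of $V$ removed'' weakened to ``all but an $\eta$-fraction removed,'' which is exactly what lets $s$ drop from $O(\log(n/\eta))$ to $O(\log(\eta^{-1}\delta^{-1}))$.

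The first step is a structural claim: the set $V'$ of vertices violating both conditions 1 and 2 of Definition \ref{dfn:maximal-matching} with respect to the output matching $\matching$ is exactly the vertex set $V_s$ of the residual graph $\commgraph_s$. The key sub-claim is that if a vertex $v$ survives to round $i$ and $u$ is a neighbor of $v$ in $\commgraph$ removed before round $i$, then $u$ was removed \emph{matched}, not isolated: were $u$ removed as isolated at some round $j<i$, then every $\commgraph_{j-1}$-neighbor of $u$ lies in $\matching_j$, but $v$ is present at round $j$ and adjacent to $u$, forcing $v\in\matching_j$ and hence removed at round $j$ — contradiction. From this, a vertex removed as isolated has all of its $\commgraph$-neighbors matched, so it satisfies condition 2; a vertex removed in some $\matching_i$ satisfies condition 1; and a vertex in $V_s$ is unmatched and (not being isolated in $\commgraph_s$) has an unmatched neighbor, so it violates both conditions. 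Hence $\abs{V'}=\abs{V_s}$, and the goal reduces to ensuring $\abs{V_s}\le\eta\abs{V}$ with probability at least $1-\delta$.

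The second step is the probabilistic bound. Applying Lemma \ref{lem:matching} conditionally at each round gives $\E(\abs{V_i}\mid\commgraph_{i-1})\le c\abs{V_{i-1}}$, so by the tower rule $\E(\abs{V_s})\le c^s n$. Markov's inequality yields $\Pr(\abs{V_s}\ge\eta n)\le c^s n/(\eta n)=c^s/\eta$, and choosing $s=\lceil \log(\eta^{-1}\delta^{-1})/\log(c^{-1})\rceil=O(\log(\eta^{-1}\delta^{-1}))$ makes this at most $\delta$. Combined with the structural claim this proves $\matching$ is $(1-\eta)$-maximal with probability at least $1-\delta$, and the run-time bound is immediate since $s$ constant-round subroutine calls are made. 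The main obstacle is the structural claim — specifically, ruling out a chain of isolated-vertex removals that could leave a surviving vertex adjacent to an unmatched (removed) vertex; once that bookkeeping is settled, the rest is a routine iteration of Lemma \ref{lem:matching} together with Markov's inequality, exactly mirroring Corollary \ref{cor:matching-probability}.
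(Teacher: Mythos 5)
Your proposal is correct and follows essentially the same route as the paper: iterate \MatchingRound\ $s$ times, bound $\E(\abs{V_s}) \leq c^s n$ via Lemma \ref{lem:matching}, apply Markov's inequality to get $\Pr(\abs{V_s} \geq \eta n) \leq \eta^{-1}c^s$, and choose $s = O(\log(\eta^{-1}\delta^{-1}))$. The only difference is that you explicitly verify the structural claim that the unmatched vertices $V'$ coincide with the residual set $V_s$ (in particular that isolated-removed vertices satisfy condition 2 of Definition \ref{dfn:maximal-matching}), a bookkeeping step the paper leaves implicit; this is a welcome addition but not a different argument.
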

  \begin{proof}
    Consider the algorithm which iterates \MatchingRound\ $s$ times. We apply Lemma \ref{lem:matching} and Markov's inequality to obtain
    \[
    \Pr(\abs{V_s} \geq \eta n) \leq \frac{c^s n}{\eta n} = \eta^{-1} c^s.
    \]
    Choosing $s = O(\log(\delta^{-1} \eta^{-1}))$, we have $\eta^{-1} c^s \leq \delta$, which gives the desired result.
  \end{proof}



\end{document}